\documentclass[conference,letterpaper]{IEEEtran}
\IEEEoverridecommandlockouts
\usepackage{cite}
\usepackage{amsthm}
\newtheorem{theorem}{Theorem}
\newtheorem{definition}{Definition}
\newtheorem{corollary}{Corollary}[theorem]
\newtheorem{lemma}{Lemma}
\newtheorem{proposition}{Proposition}
\newtheorem*{lemma*}{Lemma}

\theoremstyle{definition}
\newtheorem{remark}{Remark}

\newcommand{\sumd}[1] { \sum_{i\in [#1]}}
\newcommand{\sumdi}[1] { \sum_{i\in [#1]}}
\newcommand{\sumdk}[1] { \sum_{k\in [#1]}}

\newcommand{\nstar} { \bar{n} }
\newcommand{\summis} { \sumdi{\nstar} }

\newcommand{\sumt}{{\sum_{t\in \mathcal F}}} 
\newcommand{\suml}{{\sum_{l\in \mathcal K}}}

\newcommand{\lp}{\left(} 
\newcommand{\rp}{\right)} 
\newcommand{\lb}{\left[} 
\newcommand{\rb}{\right]} 

\newcommand{\MPALGONAME}{\textsc{AMIX-MS}} 
\newcommand{\ALGONAME}{\textsc{AMIX-ND}} 

\newcommand{\E}{\mathbb E} 
\newcommand{\ER}{\mathbb E^R} 
\newcommand{\ERJ}{\mathbb E^{R,J}} 
\newcommand{\EJ}{\mathbb E^{J}}

\newcommand{\allMIS}{\mathcal M} 
\newcommand{\numbMIS}{|\allMIS|}

\newcommand{\users}{\mathcal K} 
\newcommand{\state}{\mathcal S} 
\newcommand{\statez}{\mathcal{S}(t_0)} 
\newcommand{\statet}{\mathcal{S}(t)} 

\newcommand{\maxpacks}{a_{\max}} 
\newcommand{\maxdead}{d_{\max}} 
\newcommand{\ifgraph}{G_I} 

\newcommand{\gain}{\mathcal G} 
\newcommand{\again}{\mathcal{G}^\prime} 
\newcommand{\frm}{\mathcal F}

\newcommand{\strat}{\mathcal{P}_{NC}(\mathcal{F})}
\newcommand{\cpolicy}{\mathcal P_C}
\newcommand{\ndpol}{\mathcal{P}_{ND}}
\newcommand{\nonempty}{\mathcal{B}(t)}

\newcommand{\extracompmul}{\mathcal E_m}

\newcommand{\be}{\begin{eqnarray}}
\newcommand{\ve}[1]{\mathbf{#1}}
\newcommand{\ee}{\end{eqnarray}}
\newcommand{\ben}{\begin{eqnarray*}}
\newcommand{\een}{\end{eqnarray*}}
\newcommand{\bfl}{\begin{flalign*}}
\newcommand{\efl}{\end{flalign*}}
\newcommand{\dref}[1]{(\ref{#1})}

\newcommand{\ADVR}{\hat{\mu}}
\newcommand{\ADV}{\mu}
\newcommand{\ALG}{ALG}

\newcommand{\MIS}{MIS}
\newcommand{\traffic}{\tau}
\newcommand{\nondominated}{\mathcal{B}_{\text{ND}}}

\newcommand{\subhrm}[1]{ C_{#1}(t)}
\newcommand{\misw}[1]{  W_{M_{#1}}(t)}

\newcommand{\MREMOVE}[1] {}

\newcommand{\NEW}[1] {#1}

\newcommand{\trafficstate}{\mathbf{x}}
\newcommand{\trafficextra}{u}
\usepackage{algorithm, algorithmicx, algpseudocode}

\usepackage{amsmath,amssymb,amsfonts}
\usepackage{dsfont,bm}
\usepackage{mathtools}
\DeclareMathOperator{\interior}{int}

\usepackage{enumitem}
\usepackage{epsfig}

\usepackage{graphicx}
\usepackage{textcomp}
\usepackage{xcolor}
\usepackage{subcaption}

\def\BibTeX{{\rm B\kern-.05em{\sc i\kern-.025em b}\kern-.08em
    T\kern-.1667em\lower.7ex\hbox{E}\kern-.125emX}}

\begin{document}

\DeclarePairedDelimiter\ceil{\lceil}{\rceil}
\DeclarePairedDelimiter\floor{\lfloor}{\rfloor}

\title{On the Power of Randomization for Scheduling Real-Time Traffic in Wireless Networks}

\author{
	\IEEEauthorblockN{Christos Tsanikidis, Javad Ghaderi}
	\IEEEauthorblockA{Electrical Engineering Department, Columbia University}  
	\thanks{Emails: \{c.tsanikidis, jghaderi\}@columbia.edu. This research was supported by grants NSF 1717867, NSF 1652115, and ARO W911NF1910379.}
	}


\maketitle

\begin{abstract}
In this paper, we consider the problem of scheduling real-time traffic in wireless networks under a conflict-graph interference model
and single-hop traffic. 
The objective is to guarantee that at least a certain fraction of packets of each link are delivered within their deadlines, which is referred to as \textit{delivery ratio}. This problem has been studied before under restrictive frame-based traffic models, or greedy maximal scheduling schemes like LDF (Largest-Deficit First)  that can lead to poor delivery ratio for general traffic patterns. In this paper, we pursue a different approach through randomization over the choice of maximal links that can transmit at each time. We design randomized policies in collocated networks, multi-partite networks, and general networks, that can achieve delivery ratios much higher than what is achievable by LDF. Further, our results apply to traffic (arrival and deadline) processes that evolve as positive recurrent Markov chains. Hence, this work is an improvement with respect to both efficiency and traffic assumptions compared to the past work. We further present extensive simulation results over various traffic patterns and interference graphs to illustrate the gains of our randomized policies over LDF variants. 
\end{abstract}

\begin{IEEEkeywords}
Scheduling, Real-Time Traffic, Markov Processes, Stability, Wireless Networks
\end{IEEEkeywords}
\section{Introduction}

Much of the prior work on scheduling algorithms for wireless networks focus on maximizing
throughput. However, for many real-time applications, e.g., in Internet of Things (IoT), vehicular networks, and other cyber-physical systems, delays and deadline guarantees on packet delivery are more important than long-term throughput~\cite{lu2015real, song2008wirelesshart, gubbi2013internet}. Recently, there has been an interest in developing scheduling algorithms specifically targeted
towards handling deadline-constrained traffic~\cite{houborkum09,houkum09,kumar10,srikant10,kang2014performance,kang2015capacity}, when each packet has to be delivered within a \textit{strict deadline}, otherwise it is of no use. The key objective in these works is to guarantee that at least a fraction of the packets will be delivered to their destinations within their deadlines, which is refereed to as \textit{delivery ratio} (QoS). Providing such guarantees is very challenging as it crucially depends on the temporal pattern of packet arrivals and their deadlines, as opposed to long-term averages in traditional throughput maximization. One can construct adversarial traffic patterns that all have the same long-term average but their achievable delivery ratio is vastly different~\cite{kang2014performance,reddy2012effect}.  

Recently, there have been two approaches for providing QoS guarantees for real-time traffic in wireless networks. One is the frame-based approach~\cite{houborkum09,houkum09,kumar10,srikant10}, and the other is a greedy scheduling approach like the largest-deficit-first policy (LDF)~\cite{kang2014performance,kang2015capacity}.
In the frame-based approach, it is assumed that each frame is a number of consecutive time slots, and packets arriving in each frame have to be scheduled before the end of the frame. They crucially rely on the assumption that all packets of all users arrive at the beginning of frames~\cite{houborkum09,houkum09,kumar10}, or   
the complete knowledge of future packet arrivals and their deadlines in each frame is available at the beginning of the frame~\cite{srikant10}. This restricts the application of such policies to specific traffic patterns with periodic arrivals and synchronized users. The results for general traffic patterns without such frame assumptions are very limited, as in such settings, the real-time rate region is difficult to characterize and the optimal policy is unknown. A popular algorithm for providing QoS guarantees for real-time traffic is the largest-deficit-first (LDF) policy~\cite{kang2014performance,kang2015capacity,houborkum09, jaramillo2011scheduling}, which is the real-time variation of the longest-queue-first (LQF) policy (see, e.g.,\cite{joo2009understanding,dimakis2006sufficient}).
It is known that LDF is optimal in collocated networks \textit{under the frame-based model}~\cite{houborkum09, jaramillo2011scheduling}. The performance of LDF in the non-frame-based setting has been studied in~\cite{kang2014performance} in terms of the \textit{efficiency ratio}, which is the fraction of the real-time throughput region guaranteed by LDF. It is shown that LDF achieves an efficiency ratio of at least $\frac{1}{1+\beta}$ for a network with interference degree\footnote{The interference degree is the maximum number of links that can be scheduled simultaneously out of a link and its neighboring links.} $\beta$, under \textit{i.i.d.} (independent and identically distributed) packet arrivals and deadlines. Further, when traffic is not i.i.d., the efficiency ratio of LDF is as low as $\frac{1}{1+\sqrt{\beta}}$~\cite{kang2014performance}. In particular, for collocated networks, the efficiency ratio of LDF
under non-i.i.d. traffic is $1/2$, and in a simple star topology with one center link and $K$ neighboring links, it scales down as low as $O(\frac{1}{\sqrt{K}})$. This shows that LDF might not be suitable for high throughput real-time applications, especially with non-i.i.d. traffic, which is the case if packet drops due to deadline expiry trigger re-transmissions.   


Besides the works above on providing QoS guarantees for wireless networks, there is literature on approximation algorithms for \textit{single-link} buffer management problem~\cite{hajek2001competitiveness,kesselman2004buffer}. In this problem, packets arrive to a single link, each with a non-negative constant weight and a deadline. The goal is to maximize the total weight of transmitted packets for the worst input sequence. The approximation algorithms include the maximum-weight greedy algorithm~\cite{hajek2001competitiveness,kesselman2004buffer}, EDF$_{\alpha}$~\cite{chin2006online} which schedules the earliest-deadline packet with weight at least $\alpha\geq1$ of the maximum-weight packet, or randomized algorithms such as~\cite{bienkowski2011randomized,jez2011one,jez2013universal} where the scheduling decision is randomized over pending packets in the link's buffer. Inspired by such randomization techniques, we design randomized algorithms for wireless networks under a general interference model and given the delivery ratio requirements for the links in the network. 



\subsection{Contributions} 
\textbf{Non-i.i.d. (Markovian) Traffic Model.} Our traffic model allows traffic (\textit{arrival} and \textit{deadline}) processes that evolve as an irreducible Markov chain over a finite state space. This model is a significant extension from i.i.d. or frame-based traffic models in~\cite{houborkum09,houkum09,kumar10,srikant10,kang2014performance}. A key technique in analyzing the achievable efficiency ratio in our model is to look at the return times of the traffic Markov chain and analyze the performance of scheduling algorithms over long enough cycles consisting of multiple return times. 

\textbf{Randomized Algorithms with Improved Efficiency.} We propose randomized scheduling algorithms that can significantly outperform deterministic greedy algorithms like LDF. The key idea is to identify a structure for the optimal policy and randomize over the possible scheduling choices of the optimal policy, rather than  solely relying on the deficit queues. For \textit{collocated networks} and \textit{complete bipartite graphs} our randomized algorithms achieve an efficiency ratio of at least $0.63$ and $2/3$, respectively, and in \textit{general graphs}, achieve an efficiency ratio of at least $1/2$, all \textit{independent of the network size} and \textit{without the knowledge of the traffic model}.

\section{Model and Definitions}
\textit{Wireless Network Model.} We consider a set of $K$ links (or users) denoted by the set $\users$, where $K=|\users|$. 
Time is slotted, and at each time slot $t\in \mathbb N_0$, each link can transmit one packet successfully, if there are no interfering links transmitting at the same time. As in~\cite{kang2014performance}, it is standard to represent the interference relationships between links by an \textit{interference graph} $\ifgraph = (\users,E_I)$. Each vertex of $\ifgraph$ is a link, and an edge $(l_1,l_2)\in E_I$ indicates links $l_1$ and $l_2$ interfere with each other. Let $I_l(t)=1$ if link $l$ is transmitting a packet at time $t$, and $I_l(t)=0$ otherwise.
Hence, at any time any feasible schedule $I(t)=(I_l(t), l \in \users)$ has to form an independent set of $\ifgraph$ over links that have packets, i.e., no two transmitting links can share an edge in $\ifgraph$. We say a feasible schedule $I$ is maximal if no more links can be scheduled without interfering with some active links in $I$. Let $\nonempty$ be the set of links that have packets available to transmit at time $t$. Let $\allMIS$ denote the set of all maximal independent sets of $\ifgraph$. Then, at any time $t$, 
\ben
&\{l \in \users: I_l(t)=1\} \subseteq (\nonempty \cap M), \text{ for some }M \in \allMIS,
\een
where `$\subseteq$' holds with `$=$' if $I$ is a maximal schedule.  

\textit{Traffic Model.} We consider a single-hop traffic with deadlines for each link. Let $a_l(t)$ denote the number of packets arriving on link $l$ at time $t$,
with $a_l(t)\leq \maxpacks$, for some $ \maxpacks < \infty$. 
Each packet upon arrival has a deadline which is the maximum delay that the packet can tolerate. We define a vector $\tau_{l}(t)=(\tau_{l,d}(t); d=1,\cdots,\maxdead)$, where $\tau_{l,d}(t)$ is the number of packets with deadline $d$ arriving to link $l$ at time $t$. A packet arriving with deadline $d$ at time $t$ has to be transmitted before the end of time slot $t+d-1$, otherwise it will be dropped. The maximum deadline is bounded by a constant $\maxdead$. \NEW{Hence, the network traffic (arrival, deadline) process is described by $\tau(t)=(\tau_l(t); l \in \users), \ t \geq 0$. 
We also use $\trafficextra(t)$ to denote any unobservable (hidden) information of the traffic process, so that  the complete traffic process $\trafficstate(t)=(\tau(t), \trafficextra(t))$ evolves as an irreducible Markov chain over a finite state space $\mathcal{X}=\Gamma \times \mathcal{U}$, where $\Gamma=\{0,\cdots, \maxpacks\}^{{\maxdead}\times K }$ and $\mathcal{U}:=\{1, \cdots, U_{\max}\}$ for a finite $U_{\max}$\footnote{Essentially, $\trafficextra(t)$ assigns labels to $\tau(t)$ to allow more complicated dependencies in $\tau(t)$. If $\mathcal{U}=\varnothing$, then $\tau(t)$ itself evolves as a Markov chain.}.}





Note that the arrival and deadline processes do not need to be i.i.d. across times or users. Since the state space $\mathcal{X}$ is finite, \NEW{$\trafficstate(t)$} \MREMOVE{($\tau(t)$)}is a positive recurrent Markov chain~\cite{dynkin2012theory} and the time-average of any bounded function of \NEW{$\trafficstate(t)$ }\MREMOVE{($\tau(t)$)}is well-defined, in particular, the packet arrival rate $\overline{a}_l$, $l\in \users$,
\be \label{eq:arrival rate}
&\lim_{t\to \infty} \frac{1}{t}{\sum_{s=1}^{t} a_l(s)}=\overline{a}_l.
\ee
See Figure~\ref{fig:tightexample} for an example of a Markovian traffic process.
\begin{figure} [t]
\centering
    \includegraphics[width=0.35\textwidth]{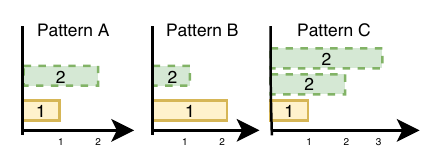}
\caption{An example of a Markovian traffic process with three traffic patterns repeating as $A \to B\to C\to A\cdots$. Each rectangle indicates a packet for a link indicated by its number. The left side of the rectangle corresponds to its arrival time, and its length corresponds to its deadline. For example on pattern $A$, we have 2 packets, 1 from link 2, with deadline 2 slots after the arrival, and 1 from link 1, with deadline in the same slot.
}
\vspace{- 0.1 in}
\label{fig:tightexample}
\end{figure}

\textit{Buffer Dynamics.} The buffer of link $l$ at time $t$ contains the existing packets at link $l$ which have not expired yet and also the newly arrived packets $\tau_l(t)$. 
Formally, we define the buffer of link $l$ by a vector $\Psi_l(t) =(\Psi_{l,d}(t); d=1,\cdots,\maxdead)$, where $\Psi_{l,d}(t)$ is the number of packets in the buffer with \textit{remaining} deadline $d$ at time $t$. The remaining deadline of each packet in the buffer decreases by one at every time slot, until the packet is successfully transmitted or reaches the deadline $0$, which in either case the packet is removed from the buffer, i.e., the buffer at the beginning of slot $t+1$ is
\be
\Psi_{l,d}(t+1)=\Psi_{l,d+1}(t)+\tau_{l,d}(t+1)-I_{l,d+1}(t),
\ee
where $I_l(t)=\sum_{d=1}^{\maxdead} I_{l,d}(t)\leq 1$, and $I_{l,d}(t)=1$ if the scheduler selects a packet with deadline $d$ to transmit at time $t$ on link $l$. By convention, we set  $\Psi_{l,{\maxdead +1}}(t)=0$, $\Psi_{l,0}(t)=0$. We define the network buffer state as $\Psi(t)=(\Psi_l(t); l \in \users)$.  

\textit{Delivery Requirement and Deficit.} As in \cite{houborkum09,houkum09,kumar10,srikant10,kang2014performance}, we assume that there is a minimum delivery ratio $p_l$ (QoS requirement) for each link $l$, $l \in \users$. This means the scheduling algorithm must successfully deliver at least $p_l$ fraction of the incoming packets on each link $l$ in long term. Formally,
\be \label{eq:delivery ratio}
&\liminf_{t\to \infty} \frac{\sum_{s=1}^tI_l(s)}{\sum_{s=1}^ta_l(s)}\geq p_l.
\ee

We define a deficit $w_l(t)$ which measures the amount of service owed to link $l$ up to time $t$ to fulfill its minimum delivery rate. As in \cite{kang2014performance,srikant10}, the deficit evolves as
\be \label{eq:deficit queue}
w_l(t+1)=\Big[w_l(t)+\widetilde{a}_l(t)-I_l(t)\Big]^+,
\ee
where $[\cdot]^+=\max\{\cdot,0\}$, and $\widetilde{a}_l(t)$ indicates the amount of deficit increase due to packet arrivals. For each packet arrival, we should increase the deficit by $p_l$ on average. For example, we can increase the deficit by exactly $p_l$ for each packet arrival to link $l$, or use a coin tossing process as in~\cite{kang2014performance,srikant10}, i.e., each packet arrival at link $l$ increases the deficit by one with the probability $p_l$, and zero otherwise. We refer to $\tilde{a}_l(t)$ as the \textit{deficit arrival} process for link $l$. Note that it holds that
\be \label{eq:deficitarr}
& \lim_{t\to \infty} \frac{1}{t}\sum_{s=1}^{t} {\widetilde{a}_l(s)}=\overline{a}_lp_l:=\lambda_l, \ l\in \users.
\ee
We refer to $\lambda_l$ as the deficit arrival rate for link $l$. We would like to emphasize that the arriving packet is always added to the link's buffer, regardless of whether and how much deficit is added for that packet. Also note that in \dref{eq:deficit queue}, each time a packet is scheduled from the link, $I_l(t)=1$, the deficit is reduced by one. The dynamics in~\dref{eq:deficit queue} define a deficit queueing system, with bounded increments/decrements, whose stability, e.g.,
\be \label{eq:strong stability}
& \limsup_{t\to \infty} \frac{1}{t}\sum_{s=1}^t\mathbb{E}[w_l(s)] <\infty,
\ee
implies that \dref{eq:delivery ratio} holds\footnote{Actually only the rate stability is enough to establish \dref{eq:delivery ratio}~\cite{neely2010queue}, however we consider this stronger notion of stability.}. Define the vector of deficits as $w(t)=(w_l(t), l \in \users)$. The system state at time $t$ is then defined as 
\NEW{$
\statet= (\Psi(t), w(t), \trafficstate(t)).
$} 
%

\textit{Objective.} Define $\cpolicy$ to be the set of all causal policies, i.e. policies that do not know the information of future arrivals and deadlines (\NEW{and the hidden state of the traffic process $\trafficstate(t)$}) in order to make scheduling decisions.
For a given traffic process \NEW{$\trafficstate(t)$} \MREMOVE{($\traffic(t)$)}, with fixed $\overline a_l$, defined in \dref{eq:arrival rate}, we are interested in causal policies that can stabilize the deficit queues for the largest set of delivery rate vectors $\ve{p}=(p_l, l \in \users)$, or equivalently largest set of $\bm{\lambda}=(\lambda_l := \overline a_l p_l, l \in \users)$ possible. 
For a given traffic process, we say the rate vector $\bm{\lambda}=(\lambda_l,l\in \mathcal K)$ is supportable under some policy $\mu \in \cpolicy$ if all the deficit queues remain stable. Then one can define the supportable (real-time) rate region of the policy $\mu$ as 
\be
\Lambda_{\mu}=\{\bm{\lambda} \geq 0: \bm{\lambda} \text{ is supportable by } \mu\}. 
\ee
Note that for a given traffic distribution, a vector $\bm{\lambda}$ corresponds to a single vector of delivery rate requirements $\ve{p}$ exactly. The supportable rate region under all the causal policies is defined as $\Lambda = \bigcup_{\mu \in \cpolicy} \Lambda_{\mu}$. The overall performance of a policy $\mu$ is evaluated by the efficiency ratio $\gamma_\mu^\star$ which is defined as 
\be \label{eq:capacity region}
\gamma_\mu^\star = \sup \{\gamma: \gamma \Lambda \subseteq \Lambda_\mu\}.
\ee
For a casual policy $\mu$, we aim to provide a \textit{universal lower bound} on the efficiency ratio that holds for ``all'' Markovian traffic processes (without knowing the transition probability matrix).

\section{Randomized Scheduling Algorithms}
In this section, we present our randomized scheduling algorithms. We start with the collocated networks, and then proceed to general networks.

\subsection{Collocated Networks}\label{sec:col}
In a collocated network, only one of the links can transmit \NEW{a } packet at any time. Hence the interference graph $\ifgraph$ is a complete graph.

Define $e_l(t)=\min\{ d:\Psi_{l,d}(t)>0 \}$ to be the deadline of the earliest-deadline packet available at link $l$ at time $t$. By convention, the minimum of an empty set is considered infinity.
We use a tuple $(w_l(t),e_l(t))_{l}$ to denote the earliest-deadline packet of link $l$ with deadline $e_l(t)$ and link deficit $w_l(t)$.  
We make the following dominance definition.
\begin{definition}\label{def:dominance}
We say that a link $l_1$ dominates a link $l_2$ at time $t$ if 
$w_{l_1}(t) \geq w_{l_2}(t)$ and $e_{l_1}(t) \leq e_{l_2}(t)$. If one of the two inequalities is strict, we call it a strict dominance. A  non-dominated link is a nonempty link that is not dominated strictly by any other link at that time.
\end{definition}
Recall that $\nonempty$ is the set of links with nonempty buffers. At every time slot, we first find the set of non-dominated links $\nondominated(t)$. One way to do that is as follows: 

\begin{algorithm}
\caption{Finding Set of Non-dominated Links}
\label{alg:nondominated}
\begin{algorithmic}[1]
    \State $H \gets \nonempty$, $\nondominated(t) \gets\varnothing$, $i \gets 0$
    \While{$H \neq \varnothing$}
    \State $i \gets i+1$
    \State Find the largest-deficit non-dominated link $h_i \in H$.
    \State Add ${h_i}$ to  $\nondominated(t)$
    \State Remove ${h_i}$ and all the links dominated by it, i.e. 
    $$
    H \leftarrow H \setminus \{l \in H: e_{l}(t) \geq e_{h_i}(t)\}. 
    $$
    \EndWhile
    \end{algorithmic}
\end{algorithm}

\begin{figure}[t]
\centering
        \includegraphics[width=0.4\textwidth]{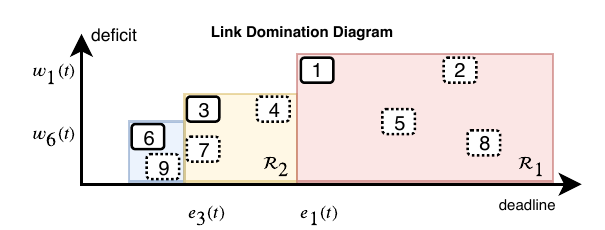}
\caption{An example for non-dominated links. Each numbered rectangle denotes the earliest-deadline packet of a link. A solid rectangle indicates that the link is non-dominated. Dashed rectangles (links) that fall in regions $\mathcal R_i$ will be dominated.}
\label{fig:dominatedpackets}
\vspace{- 0.1 in}
\end{figure}

Algorithm~\ref{alg:nondominated} returns a set $\nondominated(t) = \{h_1,..,h_k\}$, where $h_i$ is the link selected in the $i$-th iteration, and the links are ordered in the order of their deficits, i.e., $w_{h_1}(t)>w_{h_2}(t)>\cdots>w_{h_k}(t)$. See Figure~\ref{fig:dominatedpackets} for an illustrative example of the non-dominated links.
Our scheduling algorithm transmits the earliest-deadline packet of one of the links $h_i\in \nondominated(t)$ randomly, where the probabilities $p_{h_i}(t)$ are computed recursively as in Algorithm~\ref{alg:collocated}.
\begin{algorithm}
\caption{\ALGONAME:\ Randomized Scheduling in Collocated Networks}
\label{alg:collocated}
\begin{algorithmic}[1]
\State Use Algorithm~\ref{alg:nondominated} to find $\nondominated(t)=\{h_1,..,h_k\}$.
\State $r \gets 1$ 
\For{$i=1$ to $k-1$}  
\State $p_{h_i}(t)= \min\Big(1 - \frac {w_{h_{i+1}}(t)} {w_{h_i}(t)},r\Big)$ \label{alg-eq:assign-probs-col}
\State $r \gets r-p_{h_i}(t)$
\EndFor
\State $p_{h_k}(t) = r$
\State Send the earliest-deadline packet from link $h_i$ with probability $p_{h_i}(t)$.
\end{algorithmic}
\end{algorithm}
We refer to Algorithm~\ref{alg:collocated} as {\ALGONAME} which stands for \textit{Adaptive Mixing over Non-Dominated links}.


\begin{theorem}\label{th:colo}
In a collocated wireless network with $K$ links, $\ALGONAME$ achieves an efficiency ratio of at least 
\be
\gamma_{\ALGONAME}^\star \geq 1- \lp 1-\frac{1}{K}\rp^K > \frac{e-1}{e}.
\ee
\end{theorem}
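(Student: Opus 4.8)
The plan is to pair a per-slot ``gain'' bound for \ALGONAME{} with a Lyapunov-drift argument taken over long cycles of the traffic Markov chain, using $V(t)=\sum_{l\in\users}w_l(t)^2$. Write the non-dominated links $\nondominated(t)=\{h_1,\dots,h_k\}$ in decreasing deficit order; by construction $w_{h_1}(t)=\max_{l\in\nonempty}w_l(t)$, which in a collocated network is the largest weight of any feasible single-link schedule over $\nonempty$. The combinatorial heart is to show, with the probabilities $p_{h_i}(t)$ of Algorithm~\ref{alg:collocated},
\[
\sum_{i=1}^{k}w_{h_i}(t)\,p_{h_i}(t)\ \ge\ \Big(1-(1-\tfrac1k)^{k}\Big)w_{h_1}(t)\ \ge\ \Big(1-(1-\tfrac1K)^{K}\Big)w_{h_1}(t),
\]
the last step since $1-(1-1/k)^k$ decreases in $k$ and $k\le K$. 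Normalizing $w_{h_1}(t)=1$ and setting $\rho_i=w_{h_{i+1}}(t)/w_{h_i}(t)\in(0,1)$, in the regime where the budget $r$ is not exhausted early one has $p_{h_i}=1-\rho_i$ for $i<k$, and telescoping gives $\sum_i w_{h_i}p_{h_i}=1-(\prod_{i<k}\rho_i)\big((k-1)-\sum_{i<k}\rho_i\big)$. Minimizing the gain then means maximizing $f(\rho)=(\prod_i\rho_i)\big((k-1)-\sum_i\rho_i\big)$; its stationarity conditions force all $\rho_i$ equal, yielding $\rho_i=1-1/k$, $p_{h_i}\equiv 1/k$, and $f_{\max}=(1-1/k)^k$. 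A truncated assignment (some $p_{h_i}=r$) zeroes all trailing links and is exactly the full-regime bound for a smaller $k$, hence a larger gain, so the interior critical point is the global minimum.

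Next I would relate the deficit-arrival rates to this same benchmark. Since $\bm\lambda\in\Lambda$, some causal policy delivers each link $l$ at long-run rate at least $\lambda_l$. Serving a strictly dominated link never helps weighted deficit reduction --- its dominator has weakly larger deficit and weakly earlier deadline, so that dominator's earliest-deadline packet is still present and at least as valuable --- so by an exchange argument the comparison policy may be assumed to serve only non-dominated links. A collocated schedule serves at most one available link per slot, so its weighted service in slot $s$ is at most $\max_{l\in\mathcal B(s)}w_l$; summing over a cycle and invoking its delivery rate gives, for frozen weights $w(t)$, an inequality of the form $\sum_l w_l(t)\lambda_l\,T\le \E\big[\sum_{s=t}^{t+T-1}\max_{l\in\mathcal B(s)}w_l(t)\big]+o(T)$.

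I would then expand $\E[V(t+T)-V(t)\mid\statet]$ for $T$ equal to several return times of $\trafficstate(t)$. Second-order terms and the $O(\maxpacks T)$ drift of each $w_l$ across the window contribute an additive $O(T^2)$. The deficit-arrival term, under the $\gamma$-scaled requirement, is controlled by the return-time (ergodic) estimate $\E[\sum_s\til a_l(s)\mid\statet]\le\gamma\lambda_l T+O(1)$, while the service term is lower bounded slot by slot by the first-step gain. Substituting the second-step inequality, the drift is at most $2(\gamma-\gamma_0)\,\E\big[\sum_s\max_{l\in\mathcal B(s)}w_l(t)\big]+O(T^2)$ with $\gamma_0=1-(1-1/K)^K$; for $\gamma<\gamma_0$ this is strictly negative once the deficits are large, giving negative drift outside a bounded set and hence stability~\eqref{eq:strong stability}. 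Letting $\gamma\uparrow\gamma_0$ yields $\gamma_{\ALGONAME}^\star\ge 1-(1-1/K)^K>\frac{e-1}{e}$.

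The main obstacle is the third step, not the combinatorial first one: making the cycle/return-time drift rigorous for Markovian (non-i.i.d.) traffic and controlling the $O(T^2)$ error uniformly against the negative drift, while coping with the real-time subtlety that a deficit can be large even when the buffer is momentarily empty because packets expired. This is precisely what the genie comparison of the second step resolves: supportability of $\bm\lambda$ certifies enough un-expired service opportunities, keeping the benchmark $\E\big[\sum_s\max_{l\in\mathcal B(s)}w_l(t)\big]$ proportional to $\max_l w_l(t)$ in the regime that matters.
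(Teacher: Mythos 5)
Your per-slot combinatorial bound is correct and is essentially the paper's Lemma~\ref{lemma:boundsadvalg}: $\E[\gain_{\ALGONAME}(t)\mid\statet]\geq \rho\, w_{max}(t)$ with $\rho=1-(1-1/K)^K$ (the paper obtains it in one line by applying the AM--GM inequality to the $B$ terms $(1-p_{h_i})$ together with $\sum_i p_{h_i}$, whereas you solve the extremal problem in the ratios $\rho_i$; both are valid). The outer frame/Lyapunov skeleton over return times of $\trafficstate(t)$ also matches the paper's Proposition~\ref{prop}. The genuine gap is in your middle step. You bound the comparison policy's weighted service in slot $s$ by $\max_{l\in\mathcal B(s)}w_l(t)$ and then cancel this against the algorithm's per-slot gain $\rho\max_{l\in\mathcal B(s)}w_l(t)$ --- but these two maxima are taken over \emph{different} buffer sets. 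The comparison policy and \ALGONAME{} schedule different links, so from the first slot of the window onward their buffer occupancies diverge (which links are nonempty, hence which link attains the max); the discrepancy is not an additive $O(T)$ error but can be of the same order as the main term, e.g.\ the largest-deficit link may have had all its packets expire under one trajectory while still holding a packet under the other. Freezing the weights at $w(t)$ does not repair this, because it is the indicator $\mathds{1}(\Psi_l\neq 0)$ inside $w_{max}$ that differs between the two trajectories.

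The paper closes exactly this hole with a per-slot coupling (amortization): after each slot the benchmark policy's buffers and deficits are forcibly reset to coincide with the algorithm's state $\statet$, and it is paid compensation (at most $F+\maxpacks\maxdead$ per slot, absorbed into $\mathcal E_0$ and eventually an $O(\E[F^2])$ term) that provably makes the modification advantageous for it; only then do both sides of the comparison refer to the same $w_{max}(t)$, so the per-slot inequalities can be summed over the frame and divided as $\|w(t_0)\|\to\infty$. Your last paragraph names this difficulty but resolves it by assertion (``supportability of $\bm\lambda$ certifies enough un-expired service opportunities''), which is the content to be proved rather than a proof step. Two smaller points: the reduction to policies serving only non-dominated links must be quantitative --- the paper's Lemma~\ref{lemma:baselemma} shows the exchange costs at most $\maxpacks F^2$ in frame gain, which is then shown negligible relative to the diverging main term --- whereas your exchange argument is stated only qualitatively; and the paper compares against the optimal non-causal frame-restricted policy rather than directly against a causal stabilizing policy, which is what lets the renewal-reward identities over frames be invoked cleanly.
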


\begin{remark} Note that {\ALGONAME} has an efficiency ratio which is bounded below by $0.63$, regardless of the number of links. In contrast, we can construct Markovian traffic processes where the efficiency ratio of LDF is less than $1/2+\epsilon$~\cite{kang2014performance}. For example, for the traffic patterns of Figure \ref{fig:tightexample} in the model section, we will see in simulations in Section~\ref{sec:sim} that, while {\ALGONAME} can achieve delivery ratios close to $0.99$, LDF cannot do better than $0.5+\epsilon$. Note that our traffic model does allow traffic patterns as in Figure \ref{fig:tightexample}, since we do not need the traffic Markov chain to be aperiodic. 
\end{remark}

\subsection{Multipartite Networks and General Networks}
Consider the set of all maximal independent sets $\allMIS$ of the interference graph $\ifgraph$. Our randomized algorithm  selects a maximal independent set  (\MIS) $M \in \allMIS$ probabilistically and schedules the earliest-deadline packets of the induced maximal schedule $M\cap \nonempty$. Recall that $\nonempty$ is the set of links with nonempty buffers. We refer to this algorithm as {\MPALGONAME} which stands for \textit{Adaptive Mixing over Maximal Schedules}.
Before presenting the algorithm, we make a few definitions.
\begin{definition}
The weight of a {\MIS} $M\in \allMIS$ at time $t$ is 
\be \label{eq:ordering}
W_M(t) = \sum_{l\in M \cap \nonempty} w_l(t).
\ee
Let $R=\numbMIS$. We index and order $M\in \allMIS$ such that $M_i$ has the $i$-th largest weight at time $t$, i.e., 
$$W_{M_1}(t) \geq  W_{M_2}(t) \cdots \geq W_{M_R}(t).$$ 
\end{definition}
\begin{definition}
Define the subharmonic average of weights of the first $n$ {\MIS}, $n \leq R$,  at time $t$ to be
\be \label{def:subhrm}
C_n(t) = \frac {n-1}{\sum_{i=1}^{n} (W_{M_i}(t))^{-1}}. 
\ee
\end{definition}
The probabilities used by {\MPALGONAME} to select a \MIS\ $M_i$, at time $t$, are as follows
\be \label{eq:mwisprob}
    p^{\nstar}_{M_i}(t)\equiv p^{\nstar}_{i}(t) = \begin{dcases}
         1- \frac {\subhrm{\nstar}} {\misw{i}} &  1\leq i\leq \nstar \\
        0 & \nstar<i\leq R 
    \end{dcases}
    \ee
where $\nstar$ is the largest $n$ such that $\{p^{n}_i(t), 1 \leq i \leq n\}$ defines a valid probability distribution over $1 \leq i \leq n$.
Noting that $p^{n}_{i}(t)\geq p^{n}_{i+1}(t)$ for $i<n$, and $\sum_{i \leq n}p^{n}_{i}(t)=1$, $\nstar$ is therefore given by 
\be \label{def:nstar}
\nstar := \nstar(t) = \max\{ n: p^n_n(t)\geq 0\}.
\ee
We drop the dependence on $t$ for $\nstar(t)$ when there is no ambiguity. Algorithm~\ref{alg:general} gives a description of {\MPALGONAME} where $\nstar$ is found using a binary search. Then {\MPALGONAME} selects a {\MIS} $M_i$ with probability $p^{\nstar}_{i}(t)$ as in \dref{eq:mwisprob}.


\begin{algorithm}
\caption{\MPALGONAME: Randomized Scheduling in General Interference Graphs}
\label{alg:general}
\begin{algorithmic}[1]
    \State $n_1 \gets 1, n_2 \gets \numbMIS$
    \While{$n_1 \neq n_2$}
        \State $n \gets \ceil*{\frac{n_1+n_2}{2}}$
        \If{$p_{n}^{n}(t) \geq 0$}
            \State $n_1 \gets n$
        \Else 
            \State $n_2 \gets n-1$
        \EndIf
    \EndWhile
    \State $\nstar \gets n_1$
\State Select \MIS\ $M_i$ with probability $p_{M_i}^{\nstar}(t)$ as in \dref{eq:mwisprob} and transmit the earliest-deadline packet  of each link in $M_i$.
\end{algorithmic}
\end{algorithm}
The following theorem states the main result regarding the efficiency ratio of \MPALGONAME.
\begin{theorem}\label{main-theorem-general}
In a wireless network with interference graph $\ifgraph$ and maximal independent sets $\allMIS$, the efficiency ratio of \MPALGONAME\ is at least
\[\gamma_{\MPALGONAME}^\star \geq \frac {\numbMIS}{2 \numbMIS - 1} >\frac{1}{2}.\]
\end{theorem}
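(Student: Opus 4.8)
The plan is to run a Lyapunov drift argument on the deficit queues with $V(w(t))=\sum_{l\in\users}w_l(t)^2$ and to show that whenever $\bm{\lambda}/\gamma\in\Lambda$ for some $\gamma<\frac{\numbMIS}{2\numbMIS-1}$ the drift is negative outside a bounded set, which gives \eqref{eq:strong stability} and hence $\gamma^\star_{\MPALGONAME}\ge\frac{\numbMIS}{2\numbMIS-1}$. Squaring \eqref{eq:deficit queue} and summing over links, the conditional drift is $\mathbb{E}[V(w(t+1))-V(w(t))\mid\statet]\le B+2\sum_l w_l(t)\widetilde a_l(t)-2S(t)$, where $B$ absorbs the bounded second-order terms and $S(t):=\sum_l w_l(t)\,\mathbb{E}[I_l(t)\mid\statet]$ is the expected weighted service. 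Since \MPALGONAME\ selects $M_i$ with probability $p^{\nstar}_i(t)$ and serves the earliest-deadline packet of every backlogged link of $M_i$, the served weight under $M_i$ is exactly $\misw{i}$, so $S(t)=\sum_{i=1}^{\nstar}p^{\nstar}_i(t)\,\misw{i}$. Using $p^{\nstar}_i(t)=1-\subhrm{\nstar}/\misw{i}$ together with $\sum_{i\le\nstar}p^{\nstar}_i(t)=1$ (which is precisely what fixes $\subhrm{\nstar}$ in \eqref{def:subhrm}), this simplifies to
\[ S(t)=\sum_{i=1}^{\nstar}\misw{i}-\nstar\,\subhrm{\nstar}. \]

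The heart of the argument is the per-slot inequality $S(t)\ge\frac{\numbMIS}{2\numbMIS-1}\,\misw{1}$, i.e.\ that \MPALGONAME\ always extracts a $\frac{\numbMIS}{2\numbMIS-1}$ fraction of the max-weight schedule $\misw{1}$. Because $\frac{n}{2n-1}$ is decreasing in $n$ and $\nstar\le\numbMIS$, it suffices to prove $S(t)\ge\frac{\nstar}{2\nstar-1}\,\misw{1}$. First I would record that the stopping rule \eqref{def:nstar}, $\nstar=\max\{n:p^n_n(t)\ge0\}$, is equivalent to $\subhrm{\nstar}\le\misw{\nstar}$, so every summand $\misw{i}-\subhrm{\nstar}$ for $1\le i\le\nstar$ is nonnegative. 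Normalizing $\misw{1}=1$ and writing $a_i=\misw{i}\in(0,1]$, the claim becomes a minimization of $\sum_i a_i-\frac{\nstar(\nstar-1)}{\sum_i a_i^{-1}}$ over non-increasing feasible profiles. The first-order (Lagrange) conditions force the minimizer to have $a_2=\cdots=a_{\nstar}$, collapsing the problem to a single variable, and evaluating that extremal profile yields the bound (in fact the slightly stronger $\frac{3\nstar-1}{4\nstar-2}$). I expect this constrained optimization to be the main obstacle: one must argue that the interior critical point, rather than a boundary where some $a_i\to0$ or where a near-empty $M_i$ is dropped from the support, governs the worst case.

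It remains to feed this into the drift. Any feasible schedule is an independent set over backlogged links and therefore serves weight at most $\misw{1}$, so \MPALGONAME\ is a $\frac{\numbMIS}{2\numbMIS-1}$-fraction max-weight policy for the deficit system. For $\bm{\lambda}/\gamma\in\Lambda$ I would invoke the characterization of the deficit stability region as (contained in) the convex hull of the maximal-schedule service vectors: there exist $\alpha_M\ge0$ with $\sum_M\alpha_M=1$ and $\sum_{M\ni l}\alpha_M\ge\lambda_l/\gamma$, whence
\[ \sum_l w_l(t)\lambda_l\le\gamma\sum_M\alpha_M\!\!\sum_{l\in M\cap\nonempty}\!\!w_l(t)\le\gamma\,\misw{1}. \]
Combined with $S(t)\ge\frac{\numbMIS}{2\numbMIS-1}\misw{1}$, the drift is at most $B-2\big(\tfrac{\numbMIS}{2\numbMIS-1}-\gamma\big)\misw{1}$, which is negative once $\misw{1}$ is large.

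Two points will need care in the write-up. First, because the traffic is only Markovian, the one-slot arrival term $\sum_l w_l(t)\widetilde a_l(t)$ should be replaced by its average over a return-time cycle of $\trafficstate(t)$, whose length is bounded in expectation by positive recurrence and whose fluctuations are absorbed into $B$; this is the cycle technique announced in the introduction. Second, the passage from ``$\|w(t)\|$ large'' to ``$\misw{1}$ large'' relies on the deadline and buffer dynamics ensuring that a link cannot accumulate unbounded deficit while remaining empty, so that the earliest-deadline service of \MPALGONAME\ keeps every large-deficit link backlogged and hence counted in $\misw{1}$. I expect controlling this coupling between \MPALGONAME's buffer trajectory and the benchmark region to be the delicate part of combining the per-slot bound with stability.
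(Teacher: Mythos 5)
Your per-slot computation of the served weight, $S(t)=\sum_{i\le\nstar}p^{\nstar}_i(t)\misw{i}=\sum_{i\le\nstar}\misw{i}-\nstar\subhrm{\nstar}$, and the subsequent optimization showing this is at least a $\tfrac{\nstar}{2\nstar-1}$ fraction of a max-weight-type quantity, do recover the content of the paper's gain lemmas (the paper gets the optimization in one line from the AM--HM inequality applied to $\misw{1},\dots,\misw{\nstar}$, rather than via Lagrange conditions). But the overall architecture --- a direct Lyapunov drift against the benchmark ``$\bm{\lambda}/\gamma$ in the convex hull of maximal-schedule vectors, hence $\sum_l w_l(t)\lambda_l\le\gamma\,\misw{1}$'' --- has a genuine gap, and it is not a technicality you can absorb into the constant $B$. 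First, the real-time region $\Lambda$ is \emph{not} the convex hull of maximal-schedule service vectors: with deadlines the supportable region depends on the temporal arrival/deadline pattern (this is exactly why the paper says the region is hard to characterize and instead benchmarks against the non-causal $\mathcal F$-framed class $\strat$). Second, and fatally, the inequality $\sum_M\alpha_M\sum_{l\in M\cap\nonempty}w_l(t)\le\misw{1}$ only covers links that are backlogged at time $t$; a link whose packets keep expiring can carry an arbitrarily large deficit $w_l(t)$ while $l\notin\nonempty$, in which case it contributes to the arrival term $\sum_l w_l(t)\lambda_l$ of the drift but not to $\misw{1}$. Your closing remark that ``a link cannot accumulate unbounded deficit while remaining empty'' is precisely the statement that fails in the deadline-constrained setting --- it is the mechanism by which LDF loses a factor $1/2$ in collocated networks --- so the coupling you defer to the write-up is the entire difficulty, not a loose end.

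The paper avoids this by never comparing to a static region. Proposition~\ref{prop} reduces stability for $\bm\lambda\in\rho\interior(\Lambda_{NC}(\frm))$ to a \emph{gain-ratio} bound over a random frame: $\E[\sumt\gain_{\MPALGONAME}(t)\mid\statez]\ge(\rho-\epsilon)\E[\sumt\gain_{\mu^\star}(t)\mid\statez]$, where $\mu^\star$ is the optimal non-causal policy. Since $\mu^\star$'s buffers and deficits diverge from those of \MPALGONAME, the per-slot comparison is done by an exchange/coupling argument (Lemma~\ref{lemma:boundsadvalg-mp}): $\mu^\star$'s state is forcibly reset to \MPALGONAME's each slot and compensated by an amortized bonus $\extracompmul=KF$, yielding the upper bound $\sum_{i\le\nstar}\misw{i}-(\nstar-1)\subhrm{\nstar}+\extracompmul$ --- note this exceeds $\misw{1}$ in general, so the correct adversary benchmark is strictly larger than the max-weight schedule you use. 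The compensation is then summed over the frame and shown to be negligible as $\|w(t_0)\|\to\infty$, using that a max-deficit link receives fresh arrivals at rate $\overline a_{l_1}>0$ so that $\E[\sumt w_{max}(t)\mid\statez]\to\infty$. To repair your proposal you would essentially have to import this frame construction and coupling wholesale; the per-slot drift against the convex hull cannot be made to close.
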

A special case of this theorem is for networks with a complete $n$-partite interference graph, $n\geq2$. In a complete $n$-partite graph, with $n$ components, $V_1, \cdots,V_n$, links in each component do not share any edge but there is an edge between any two links in different components. Hence, each component $V_i$, $1\leq i\leq n$ is a {\MIS}. We state the result as the following corollary which immediately follows from Theorem~\ref{main-theorem-general}.  
\begin{corollary}
For a wireless network with a complete $n$-partite interference graph, under \MPALGONAME,
\[\gamma_{\MPALGONAME}^\star \geq \frac {n}{2 n - 1}.\]
\end{corollary}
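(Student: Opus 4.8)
The plan is to deduce the corollary directly from Theorem~\ref{main-theorem-general} by computing the cardinality $|\mathcal M|$ of the maximal independent sets of a complete $n$-partite interference graph. Once I establish that $|\mathcal M| = n$, I substitute $R = |\mathcal M| = n$ into the already-proven bound $\gamma^\star_{\MPALGONAME} \geq \frac{|\mathcal M|}{2|\mathcal M|-1}$, which yields exactly $\frac{n}{2n-1}$. No inequality is lost in this substitution, so the corollary is entirely a matter of the combinatorial identity $|\mathcal M| = n$ together with the general theorem.

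The single step requiring argument is that the maximal independent sets of the complete $n$-partite graph with parts $V_1,\dots,V_n$ are exactly the parts themselves, i.e.\ $\mathcal M = \{V_1,\dots,V_n\}$. Each $V_i$ is independent since the $n$-partite structure puts no edges inside a part, and it is maximal because every vertex outside $V_i$ lies in some other part $V_j$ and is joined to \emph{all} of $V_i$, so nothing can be added. Conversely, any independent set must lie inside one part, since two vertices from distinct parts are adjacent; and a maximal one must be the entire part containing it. This gives a bijection between maximal independent sets and parts, hence $|\mathcal M| = n$, and the corollary follows.

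Because this reduction is immediate, the genuine content is inherited from Theorem~\ref{main-theorem-general}. If instead I were to prove the $n$-partite bound from scratch I would run the same argument: take the quadratic Lyapunov function $V(t)=\sum_l w_l(t)^2$ on the deficit queues, bound its drift, and reduce the negative term to the expected weighted service $\sum_{i\le\bar n} p^{\bar n}_i(t)\,W_{M_i}(t)$ that \MPALGONAME\ provides over the parts. The crux is the algebraic lemma that this service is at least $\frac{\bar n}{2\bar n-1}\,W_{M_1}(t)\geq\frac{n}{2n-1}\,W_{M_1}(t)$, which I would obtain by combining the trivial bound $\sum_i p^{\bar n}_i W_{M_i}\geq p^{\bar n}_1 W_{M_1}=W_{M_1}-C_{\bar n}(t)$ with the AM--HM inequality $\sum_{i\le\bar n} W_{M_i}\geq \bar n^2\,C_{\bar n}(t)/(\bar n-1)$ that the definition of the subharmonic average $C_{\bar n}(t)$ forces, and then eliminating $C_{\bar n}(t)$ between the two; the monotonicity of $x\mapsto x/(2x-1)$ upgrades $\bar n$ to $n$.

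The step I expect to be the main obstacle---again inherited from the theorem rather than native to the corollary---is not this algebra but linking the per-slot service guarantee to the efficiency ratio under Markovian traffic: one must show that delivering a $\gamma$-fraction of the max-weight service $W_{M_1}(t)$ in every state produces negative Lyapunov drift whenever $\bm{\lambda}\in\gamma\Lambda$. Since the service is constrained by packet availability $\mathcal B(t)$ and deadlines, and the traffic is only positive recurrent (not i.i.d.\ or even aperiodic), this comparison cannot be done slot-by-slot; it requires averaging the drift over cycles built from return times of the traffic Markov chain, as flagged in the introduction, so that the arrival-weighted term $\sum_l w_l(t)\,\mathbb E[\tilde a_l(t)]$ is dominated by the accumulated max-weight capacity.
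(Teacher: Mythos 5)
Your proposal is correct and matches the paper's own route: the paper derives the corollary as an immediate consequence of Theorem~\ref{main-theorem-general} by observing that in a complete $n$-partite graph each component $V_i$ is a maximal independent set, so $|\allMIS|=n$ and the bound $\frac{|\allMIS|}{2|\allMIS|-1}$ specializes to $\frac{n}{2n-1}$. Your additional verification that the parts are the \emph{only} maximal independent sets (so the substitution is exact) is a worthwhile detail the paper leaves implicit, but it is the same argument.
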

\begin{remark}
We emphasize on the importance of Theorem~\ref{main-theorem-general} using a simple interference graph with `star' topology. This is a special case of a bipartite graph with only two components, $V_1$ is the center node, and $V_2$ are the leaf nodes. Notice that the guarantee of {\MPALGONAME} in this case is at least $\frac{2}{3}$, regardless of the number of nodes $K$. This is a significant improvement over LDF, whose efficiency ratio is at least $\frac{1}{K}$ under i.i.d. traffic but not better than  $\frac{1}{\sqrt{K-1}+1}$ under Markovian traffics~\cite{kang2014performance}.
\end{remark}
\begin{remark} We note that the computational complexity of {\MPALGONAME} could be high for general graphs as it requires finding an ordering of maximal schedules. However, it is easily applicable for $n$-partite graphs or small graphs. Moreover, we can further approximate the algorithm by only ordering a subset of maximal schedules as opposed to finding all of them. The randomization in {\MPALGONAME} can be also potentially implemented in a distributed manner by using distributed CSMA-like schemes such as~\cite{ghaderi2010design, ni2012q, shah2010delay}.   
\end{remark}

\section{Analysis Technique}
We provide an overview of the techniques in our proofs. 
We first mention a lemma below which should be intuitive.
\begin{lemma}
Without loss of generality, we consider natural policies that use a maximal schedule to transmit at each time. Further, if a link is included in the schedule, its
earliest-deadline packet will be selected for transmission.  
\end{lemma}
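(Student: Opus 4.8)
The plan is to reduce the claim to the statement that for every causal policy $\mu\in\cpolicy$ and every supportable rate vector $\bm\lambda\in\Lambda_\mu$ there is a \emph{natural} policy $\mu'$ (one that schedules a maximal set of nonempty links and, on each scheduled link, transmits its earliest-deadline packet) with $\bm\lambda\in\Lambda_{\mu'}$. Since natural policies form a subclass of $\cpolicy$, this gives $\Lambda=\bigcup_{\mu'\text{ natural}}\Lambda_{\mu'}$, so restricting both the benchmark and the candidate policies to natural ones costs nothing. I would prove the two halves by a sample-path coupling on a common arrival/deadline realization, tracking for each link $l$ the cumulative deliveries $N_l(t)=\sum_{s<t}I_l(s)$ and invoking the reflection representation $w_l(t)=\max_{0\le s\le t}\big(\sum_{u=s}^{t-1}(\widetilde a_l(u)-I_l(u))\big)^+$ of the deficit queue \dref{eq:deficit queue}, which depends on the schedule only through $\{I_l(u)\}$.

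For the earliest-deadline part, fix a policy $\mu$ and let $\mu'$ agree with $\mu$ in \emph{which} links it serves at every slot, differing only in that each served link transmits its earliest-deadline packet. I would maintain the invariant that, per link, the multiset of remaining deadlines in $\mu'$'s buffer is obtained from $\mu$'s by replacing packets with pointwise later-or-equal deadlines; a standard interchange step preserves it, since swapping a transmitted earlier-deadline packet for a later-deadline one and re-matching future transmissions never forces $\mu'$ to send an expired packet. Because $\mu$ and $\mu'$ deliver the same number of packets on every link in every slot, their delivery sequences $\{I_l(u)\}$, and hence the deficits $\{w_l(t)\}$, coincide, so $\Lambda_{\mu'}=\Lambda_\mu$.

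For maximality, let $\mu'$ greedily extend $\mu$'s schedule to a maximal independent set of nonempty links at each slot and then apply the earliest-deadline rule. On the coupled path $\mu'$ serves a superset of $\mu$'s links whenever those links are nonempty under $\mu'$, so the natural comparison is through cumulative deliveries. I would argue $N'_l(t)\ge N_l(t)$ for all $l$ and $t$, which yields $\liminf_t N'_l(t)/t\ge \lim_t N_l(t)/t$; combined with the fact that stability of the reflected queue \dref{eq:deficit queue} is governed by the long-run balance between a link's delivery rate and its deficit-arrival rate $\lambda_l$ (the increments are bounded, so a negative mean drift suffices), and that under a stable $\mu$ the delivery rate of link $l$ equals $\lambda_l$, this shows $\mu'$ serves link $l$ at rate at least $\lambda_l$ and hence transfers stability, giving $\bm\lambda\in\Lambda_{\mu'}$.

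The main obstacle is exactly this cumulative-delivery domination, because the more aggressive $\mu'$ can \emph{deplete} a link that $\mu$ would still have served, momentarily delivering fewer packets than $\mu$. I would control this by also tracking cumulative expirations $E_l(t),E'_l(t)$ and cumulative arrivals $D_l(t)$, and showing from the same buffer-domination invariant (fewer, less urgent packets under $\mu'$) that $E'_l(t)\le E_l(t)$, i.e. serving earlier under the earliest-deadline rule only prevents expirations. At any slot where $\mu'$ is empty on link $l$ while $\mu$ is not, the buffer identities $N_l+E_l<D_l=N'_l+E'_l$ then force $N'_l(t)>N_l(t)$, so the cumulative-delivery gap is strictly positive precisely when it could decrease and therefore never drops below zero. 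Establishing $E'_l\le E_l$ rigorously — that the aggressive, earliest-deadline policy wastes no more packets to expiry than $\mu$ — is the one place the argument genuinely requires care; the remainder is bookkeeping on the coupled sample path.
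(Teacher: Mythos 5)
Your overall plan is sound and the earliest-deadline half is essentially the paper's argument (a one-packet swap that leaves the deficit trajectory unchanged and only lengthens a buffered deadline). For the maximality half, however, you take a genuinely different and heavier route than the paper, and the step you yourself flag --- proving $E'_l(t)\le E_l(t)$ for the greedily-maximalized, EDF-coupled policy --- is exactly the crux and is left unproven. It is fillable: the right induction hypothesis is a per-link prefix-sum (stochastic-ordering) domination $\sum_{d'\le d}\Psi'_{l,d'}(t)\le\sum_{d'\le d}\Psi_{l,d'}(t)$ for every $d$, which is preserved under common arrivals, under EDF service (the served packet sits at the smallest nonzero index, so all prefix sums from that index drop), and under expiration (compare the $d=1$ terms), and which simultaneously yields the expiration bound and closes your $N'_l(t)\ge N_l(t)$ induction via the buffer identity. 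Without that invariant spelled out, the "aggressive policy wastes no more packets to expiry" claim is an assertion, not a proof.

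The paper avoids this difficulty entirely by a different exchange: at the first slot where $\mu$ is non-maximal, the modified policy $\mu'$ adds the missing links and thereafter transmits \emph{exactly the same packets} as $\mu$, merely skipping the later transmission of any packet it already sent early. Since $\mu'$ only ever sends packets that $\mu$ also sends (plus the extra ones at the deviation slot), and sends them no later, no buffer under $\mu'$ is ever depleted of a packet that $\mu$ will need, so $\sum_{s\le t}I_l^{\mu'}(s)\ge\sum_{s\le t}I_l^{\mu}(s)$ holds for free and one simply iterates over deviation times. This local, one-deviation-at-a-time surgery buys a much shorter proof at the cost of an implicit limiting argument over infinitely many modifications; your global coupling defines the natural policy once and for all and gives a cleaner object, but it forces you to confront the depletion/expiration bookkeeping that the paper's construction is designed to sidestep. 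One further caution on your last step: cumulative domination $N'_l(t)\ge N_l(t)$ does not by itself give increment domination $N'_l(t)-N'_l(s)\ge N_l(t)-N_l(s)$, so it controls the delivery-ratio criterion \dref{eq:delivery ratio} (a $\liminf$ of cumulative counts, which is what the paper invokes) rather than pathwise dominating the reflected deficit queue; you should phrase the conclusion through \dref{eq:delivery ratio} as the paper does rather than through the reflection formula.
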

\begin{proof}
The proof is through exchange arguments.

 For the first part, assume that a policy $\mu$ at time $t_0$ chooses a non-maximal schedule, hence a packet $x$ from link $l$ could have been included in the schedule. Consider an alternative policy $\mu^\prime$ that does schedule any link that could have been included at time $t_0$ so that the schedule becomes maximal, and for the rest of the time, it transmits exactly the same packets as the initial policy $\mu$, except for the transmission of any packet $x$, if $\mu$ schedules it at a later point. This results in $\sum_{s=1}^{t}I_l^{\mu'}(s)\geq \sum_{s=1}^{t} I_l^{\mu}(s), \forall t\geq 1$, and at the same time every schedule transmitted by $\mu^\prime$ for $t\leq t_0$ is maximal. We can repeat this argument for times $t>t_0$ to convert $\mu$ to a policy $\widetilde \mu$ that transmits maximal schedules. We then have
 $\sum_{s=1}^{t}I_l^{\widetilde \mu}(s)\geq \sum_{s=1}^{t} I_l^{\mu}(s), \forall t\geq 1$
 and from (\ref{eq:delivery ratio}) we see that any delivery ratio supported by $\mu$ is also supported by $\widetilde \mu$.

For the second part, consider a policy $\mu$ that at some time $t_0$ transmits a packet that is not the earliest-deadline packet $x_1 =(w_1(t),d_1)_l$ in link $l$. Then there is some other packet $x_2=(w_1(t),d_2)_l$ in  link $l$ with $d_2<d_1$. If we let $\mu$ transmit $x_2$ instead of $x_1$, the buffer state will be improved since we will have the same set of packets in link $l$ except for one packet with a longer deadline now. Further, the link's deficit will not change. 
\end{proof}
\textit{Frame Construction.} A key step in the analysis of our scheduling algorithms is a careful frame construction. We emphasize that the frame construction is only for the purpose of analysis and is \textit{not} part of our algorithms. The F-framed construction in \cite{kang2014performance} only works for i.i.d. arrivals and deadlines. Here, we need a construction that can handle our Markovian traffic model. We present this construction below where frames have random length as opposed to fixed length in \cite{kang2014performance}. 

\begin{definition}[Frames and Cycles]\label{def:frame}
Starting from an initial \NEW{complete } traffic state $\trafficstate(0)=\ve{x}\in \mathcal{X}$, let $t_i$ denote the $i$-th return time of traffic Markov chain $\trafficstate(t)$ to $\ve{x}$, $i=1,\cdots$. By convention, define $t_0=0$. The $i$-th cycle $\mathcal{C}_i$ is defined from the beginning of time slot $t_{i-1}+1$ until the end of time slot $t_i$, with cycle length $C_i=t_i-t_{i-1}$. Given a fixed $k\in \mathbb N$, we define the $i$-th frame $\mathcal{F}^{(k)}_i$ as $k$ consecutive cycles $\mathcal{C}_{(i-1)k+1}, \cdots, \mathcal{C}_{ik}$, i.e., from the beginning of slot $t_{(i-1)k}+1$ until the end of slot $t_{ik}$. The length of the $i$-th frame is denoted by $F^{(k)}_i=\sum_{j=(i-1)k+1}^{ik}C_j$. Define $\mathcal J(\mathcal{F}^{(k)})$ to be the space of all possible traffic patterns $(\tau(t), t\in \mathcal{F}^{(k)})$ during a frame $\mathcal{F}^{(k)}$. Note that these patterns start after $\ve{x}$ and end with $\ve{x}$.
\end{definition} 
By the strong Markov property and the positive recurrence of traffic Markov chain, frame lengths $F^{(k)}_i$ are i.i.d with mean $\E[F^{(k)}]=k\E[C]$, where $\E[C]$ is the mean cycle length which is a bounded constant~\cite{dynkin2012theory}. In fact, since state space $\mathcal{X}$ is finite, all the moments of $C$ (and $F^{(k)}$) are finite.     
We choose a fixed $k$, and, when the context is clear, drop the dependence on $k$ in the notation. 

Define the class of \textit{non-causal $\mathcal{F}$-framed} policies $\strat$ to be the policies that, at the beginning of each frame $\mathcal{F}_i$, have complete information about the traffic pattern in that frame, but have a restriction that they drop the packets that are still in the buffer at the end of the frame. Note that the number of such packets is at most $\maxdead \maxpacks K$, which is negligible compared to the average number of packets in the frame, $\overline a_l\E[F]=\overline a_lk\E[C]$, as $k \to \infty$.  Define the rate region 
\be
&\Lambda_{NC}(\frm) = \bigcup_{\mu \in \strat}\Lambda _{\mu}.
\ee
Given a policy $\mu \in \strat$, the time-average service rate $\bar{I}_l$ of link $l$ is well defined. In fact, by the renewal reward theorem (e.g.~\cite{ross2013applied}, Theorem 5.10), and boundedness of $\E[F]$, 
\be
\lim_{t \to \infty} \frac{\sum_{s=1}^{t}I_l(s)}{t}=\frac{\E\left[\sum_{t\in \mathcal{F}}I_l(t)\right]}{\E[F]}=\bar{I}_l.
\ee
Similarly for the deficit arrival rate $\lambda_l$, defined in \dref{eq:deficitarr}, 
\be \label{eq:lambdarenewal}
\frac{\E[\sumt \widetilde{a}_l(t)]}{\E[F]}=\lambda_l,\ l\in \users.
\ee
In Definition~\ref{def:frame}, each frame consists of $k$ cycles. Using similar arguments as in \cite{kang2014performance}, it is easy to see (and it is intuitive) that 
\[\liminf_{k\to \infty}\Lambda_{NC}(\frm^{(k)}) \supseteq \interior (\Lambda). \]
where $\interior(\cdot)$ is the interior.
Hence, if we prove that for a causal policy ALG, there exists a constant $\rho$, and a large $k_0$, such that for all $k\geq k_0$, 
\be \label{eq:region2}
\rho\interior(\Lambda_{NC}(\frm^{(k)})) \subseteq \Lambda_{ALG},
\ee
then it follows that $\Lambda_{ALG} \supseteq \rho \interior(\Lambda)$. For our algorithms, we find a $\rho$ such that \dref{eq:region2} holds for \textit{any traffic process} under our model. Then it follows that $\gamma^\star_{ALG}\geq \rho$.

We define the \textit{gain} of a policy $\mu$ at time $t$ as
\be
&\mathcal G_{\mu}(t)=\sum_{l \in \users}w^\mu_l(t)I^\mu_l(t),
\ee
and the gain over a frame is $\sumt \mathcal G_{\mu}(t)$. To prove \dref{eq:region2}, we rely on comparing the gain (total deficit of packets transmitted) by ALG and an optimal max-gain non-causal policy over a frame. The following proposition states the result for any general \NEW{interference } graph.
\begin{proposition}
\label{prop} 
Consider a frame $\frm \equiv \frm^{(k)}$, for some fixed $k$ based on returns of traffic process  \NEW{$\trafficstate(t)$ } \MREMOVE{($\traffic(t)$)} to a state $\ve{x}$. Let $\|w(t_0)\|=\suml w_{l}(t_0)$ be the norm of the initial deficit vector at the start of the frame. Suppose for a causal policy {ALG}, given any $\epsilon>0$, there is a $W^\prime$ such that when $\|w(t_0)\|>W^\prime$,
\be
\frac {\E\left[\sumt \gain_{\ALG}(t)  | \statez\right]}{\E\left[\sumt \gain_{\mu^\star}(t) | \statez\right]} \geq \rho-\epsilon,
\ee
where \NEW{$\statez=(\Psi(t_0),w(t_0), \trafficstate(t_0))$}, and $\mu^\star$ is the optimal non-causal policy that maximizes the gain over the frame. Then for any $\lambda \in \rho \interior( \Lambda_{NC}(\frm))$, the network state process  $\{\statet\}$ is positive recurrent, and further, the deficit queues are bounded in the sense of \dref{eq:strong stability}. \end{proposition}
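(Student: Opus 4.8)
The plan is to establish a Foster--Lyapunov drift condition over one frame, taking the quadratic Lyapunov function $V(w)=\suml w_l(t)^2$ and working with the chain $\statet$ sampled at the frame boundaries $\{t_{ik}\}$. This sampled process is a Markov chain whose traffic coordinate is pinned to $\ve{x}$ at every boundary and whose buffer coordinate $\Psi$ lives in a bounded (hence finite) set, since packets expire within $\maxdead$ slots and at most $\maxpacks$ arrive per slot. Thus the only unbounded coordinate of $\statet$ is the deficit vector $w$, and positive recurrence will follow once I exhibit a frame drift that is negative in $\|w(t_0)\|$ outside a finite set; the strong-stability bound $\dref{eq:strong stability}$ will then follow from the associated time-average estimate.

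First I would expand the one-step quadratic drift from $\dref{eq:deficit queue}$ using $([\cdot]^+)^2\le(\cdot)^2$ and telescope across the frame to obtain
\[ V(w(t_0+F))-V(w(t_0)) \le 2\suml w_l(t_0)\lp A_l-D_l^{\ALG}\rp + O(F^2), \]
where $A_l=\sumt\til a_l(t)$ and $D_l^{\ALG}=\sumt I_l^{\ALG}(t)$; the $O(F^2)$ term absorbs both the bounded per-slot squared increments and the error from replacing the running deficit $w_l(t)$ by $w_l(t_0)$, which differ by at most $\maxpacks(t-t_0)\le\maxpacks F$. The same bounded-increment estimate identifies the gain with the fixed-weight service, namely $\E[\sumt\gain_{\ALG}(t)\mid\statez]=\suml w_l(t_0)\,\E[D_l^{\ALG}\mid\statez]+O(\E[F^2])$, and analogously for $\mu^\star$.

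Next I would evaluate the conditional expectations. Because frames regenerate at $\ve{x}$ and both $\til a_l$ and $F$ depend only on the traffic process, the renewal identity $\dref{eq:lambdarenewal}$ gives $\E[A_l\mid\statez]=\lambda_l\E[F]$. Fixing $\lambda\in\rho\,\interior(\Lambda_{NC}(\frm))$ means $\lambda/\rho\in\interior(\Lambda_{NC}(\frm))$, so there is a non-causal framed policy $\til\mu\in\strat$ whose per-frame service obeys $\E[D_l^{\til\mu}\mid\statez]\ge(\lambda_l/\rho+\delta)\E[F]$ for some $\delta>0$. Since $\mu^\star$ maximizes the gain realization by realization, $\E[\sumt\gain_{\mu^\star}\mid\statez]\ge\suml w_l(t_0)\E[D_l^{\til\mu}\mid\statez]-O(\E[F^2])\ge\tfrac1\rho\suml w_l(t_0)\lambda_l\E[F]+\delta\E[F]\|w(t_0)\|-O(\E[F^2])$. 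Combining this with the hypothesis $\E[\sumt\gain_{\ALG}\mid\statez]\ge(\rho-\epsilon)\E[\sumt\gain_{\mu^\star}\mid\statez]$ (valid once $\|w(t_0)\|>W^\prime$), the drift bound, and $\E[\sumt\gain_{\mu^\star}\mid\statez]\le\|w(t_0)\|\E[F]+O(\E[F^2])$, the $\tfrac1\rho(\cdot)$ terms cancel against $\E[A_l\mid\statez]$ and I am left with
\[ \E\lb V(w(t_0+F))-V(w(t_0))\mid\statez\rb \le 2\E[F]\,\|w(t_0)\|\,(\epsilon-\rho\delta)+O(\E[F^2]). \]
Choosing $\epsilon<\rho\delta$ makes the coefficient of $\|w(t_0)\|$ strictly negative, and since every moment of $F$ is finite the additive term is a genuine constant; hence the frame drift is at most $-\eta\|w(t_0)\|+B$ for constants $\eta>0$, $B<\infty$, negative outside a finite set.

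Finally I would apply the Foster--Lyapunov criterion to the frame-boundary chain to conclude its positive recurrence together with $\limsup_N\tfrac1N\sum_{n<N}\E\lb\|w(t_{nk})\|\rb\le B/\eta$, and transfer positive recurrence to the full slot-indexed process $\{\statet\}$ using the finite expected excursion length $\E[F]$ between boundaries. To upgrade the frame-average into the per-slot bound $\dref{eq:strong stability}$, I would use $w_l(s)\le w_l(t_0)+\maxpacks F$ within a frame together with $\E[F^2]<\infty$, so the renewal--reward theorem controls the slot-average of $\E[w_l(s)]$ by the frame-average plus an $O(\E[F^2])$ correction. I expect the main obstacle to be precisely this passage from a random-length frame drift to both positive recurrence and the time-average deficit bound: the finite-moment control of $F$ and the identification of conditional per-frame expectations with the stationary renewal rates must be argued carefully, whereas the quadratic drift algebra itself is routine.
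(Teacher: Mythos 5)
Your proposal is correct and follows essentially the same route as the paper: a quadratic Lyapunov function on the deficits, a drift computation over one (random-length) frame for the chain sampled at frame boundaries, freezing the running weights at $w(t_0)$ up to $O(\E[F^2])$ errors, lower-bounding the service term via the gain hypothesis and a non-causal policy attaining $\lambda/\rho$ strictly inside $\Lambda_{NC}(\frm)$, and closing with Foster--Lyapunov plus the bounded-excursion argument for the per-slot stability bound. The only differences are cosmetic (an additive $\delta$ margin where the paper uses a multiplicative $(1+2\epsilon)$ margin, and slightly different bookkeeping of the $O(\E[F^2])$ constants).
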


The proof of Proposition~\ref{prop} is provided in Section~\ref{sec:prop_proof}.

\textit{Gain Analysis.} With Proposition~\ref{prop} in hand, we analyze the achievable gain of our algorithm over a frame, compared with that of the optimal non-causal policy $\mu^\star$. Since characterizing $\mu^\star$ is hard, we extend a coupling technique from~\cite{chin2006online, jez2011one, bienkowski2011randomized,jez2012online} (developed for constant-weight single buffer analysis)  to stochastic process \NEW{$(\Psi(t), w(t),\trafficstate(t))$ } in a general network. 

Consider a state $(\Psi(t), w(t),\trafficstate(t))$ under our randomized algorithms at time $t \in \frm$, and the state $(\Psi^{\mu^\star}(t), w^{\mu^\star}(t), \trafficstate(t))$ under the optimal policy $\mu^\star$. \NEW{Of course, the traffic process $\trafficstate(t)$ is the same for the entire time in the frame for both algorithms. }
 We change the state of $\mu^\star$ (by modifying its buffers and deficits) to make it identical to $(\Psi(t), w(t),\trafficstate(t))$, but also give $\mu^\star$ a larger gain $\again_{\mu^\star}(t)>\gain_{\mu^\star}(t)$ that can ensure the change is advantageous for $\mu^\star$ considering the rest of the frame.  
Then, taking the expectation $\E[\gain^\prime(t)]$ with respect to the random decisions of our algorithm, {\ALGONAME} or {\MPALGONAME}, and traffic patterns in a frame, we can bound the optimal gain of $\mu^\star$. Then we can prove the main results in view of Proposition~\ref{prop}.      

The gain analysis of {\ALGONAME} in collocated networks and {\MPALGONAME} in general networks is presented in Sections~\ref{sec:gain_collocated} and \ref{sec:gain_general}, respectively.
\section{Proofs of Main Results}
We first provide the proof of Proposition~\ref{prop} and then provide the gain analysis of our algorithms. In what follows, we define 
\be \label{eq:maxlink}
w_{max}(t)=\max_{l \in \users} w_l(t)\mathds{1}(\Psi_l \neq 0),
\ee
to be the maximum deficit of a nonempty link at time $t$. Also define $[N]:=\{1,2,...,N\}$. We use $\E_{X}[\cdot]$ to denote conditional expectation $\E[\cdot |X]$. $\E^{Y}[\cdot]$ is used to explicitly indicate that expectation is taken with respect to some random variable $Y$. 
$|A|$ is used to denote the cardinality of set $A$.

\subsection{Proof of Proposition~\ref{prop}}\label{sec:prop_proof}
We look at the state process $\{\statet\}$ at times $t_i$ when frames start. We show that this sampled chain is positive recurrent and further its mean deficit size is stable in the sense of~\dref{eq:strong stability}. From this it follows that the original process $\{\statet\}$ is also stable as the mean frame size $\E[F]$ is bounded and the mean deficits within a frame can change at most by $a_{max}K\E[F]$.   

Since $\lambda \in \rho \interior(\Lambda_{NC})$, we have for some $\epsilon>0$, and some policy $\mu \in \strat$,
\be
\lambda \E[F](1+2 \epsilon) \preceq \rho \E[\sumt I^{\mu}(t)],  \label{eq:lambdabound}
\ee
where $\preceq$ is the component-wise inequality between vectors. This is simply due to the fact that in each frame, the number of deficit arrivals $\sumt \widetilde{a}(t)$ and the number of departures under the policy $\mu$ are i.i.d across the frames, with means $\E[F]\lambda$ and $\E[\sumt I^{\mu}(t)]$, respectively, by the renewal reward theorem. Hence, to ensure stability,~\dref{eq:lambdabound} must hold. Next, consider the Lyapunov function
\ben
V(t) := V(\statet) = \frac 1 2 \sum_{l \in \users} w_l^2(t).
\een
Let $\{I(t), t\in \frm\}$ denote the scheduling decisions by {\ALG} within the frame. Using~\dref{eq:deficit queue}, we get
\begin{flalign*}
&w_l^2(t+1)-w_l^2(t) \leq \left(w_l(t)+\widetilde{a}_l(t)-I_l(t)\right)^2 - w_l^2(t) \\
&= 2 w_l(t) (\widetilde{a}_l(t)-I_l(t)) + (\widetilde{a}_l(t)-I_l(t))^2\\
&\leq 2 w_l(t) (\widetilde{a}_l(t)-I_l(t)) + a^2_{max}.
\end{flalign*}
\noindent
Then we compute the drift over $F$ slots
\begin{flalign}
&V(t_0+F)-V(t_0) = \frac 1 2 \suml \lp w_l^2(t_0+F) - w_l^2(t_0) \rp  \nonumber \\
&= \frac 1 2 \sumt \suml \lp w_l^2(t+1) - w_l^2(t) \rp  \nonumber \\ 
&\leq 
{K a^2_{max} F}/ 2 + \sumt \suml w_l(t) \left(\widetilde{a}_l(t)- I_l(t)\right) \label{eq:driftright}.
\end{flalign}
Let $\E_{t_0}[\cdot]=\E[\cdot|\statez]$. Then, over a frame,
\begin{flalign}
&\E_{t_0}\lb V(t_0+F)-V(t_0) \rb \leq \nonumber\\
&
\E_{t_0} [\sumt \suml 
w_l(t)  \widetilde{a}_l(t)] -
\E_{t_0}[ \sumt \suml  w_l(t) I_l(t)]
+C_1,
\label{eq:usebounds}
\end{flalign}
where $C_1=K a^2_{max} \E[F]/ 2$.
Noting that
\begin{equation} \label{ineqweight}
w_{l}(t_0) -F\leq w_{l}(t) \leq w_l(t_0) + a_{max}F,
\end{equation}
at any $t\in \frm$, we can bound
\be \label{eq:firsttermintermediate}
\E_{t_0} [\sumt \suml 
w_l(t)  \widetilde{a}_l(t)] 
\leq \suml (w_l(t_0) \lambda_l\E[F]) + C_2,
\ee
where we have used \dref{eq:lambdarenewal} and \dref{ineqweight}, and  
$C_2=a^2_{max}\E[F^2]K<\infty$. 
Let $I^\star(t)$ be the scheduling decisions by the policy $\mu^\star$, and $I^\mu(t)$ be the scheduling decisions by the policy $\mu \in \strat$ in~\dref{eq:lambdabound}. 
Note that $\mu^\star$ is the optimal non-causal policy that maximizes the gain over the frame and can transmit packets from a previous frame (included in the initial buffer $\Psi(t_0)$). This only improves the performance of $\mu^\star$, compared to starting with empty buffers, hence,
\be \label{eq101}
\E_{t_0}\Big[\sumt \suml w_l^\star(t) I_l^\star(t)\Big] \geq \E_{t_0} \Big[\suml \sumt  w^\mu_l(t) I^\mu_l(t)\Big].
\ee
Using \dref{eq101} and the proposition assumption, given $\epsilon>0$, there is a $W^\prime$ such that, if $\|w(t_0)\|>W^\prime$,
\begin{align}
&\E_{t_0}\Big[\sumt \suml w_l(t) I_l(t)\Big] \geq (\rho-\epsilon) \E_{t_0}\Big[\sumt \suml w_l^\star(t) I_l^\star(t)\Big] \nonumber\\
&\geq (\rho-\epsilon) \E_{t_0} \Big[\suml \sumt  w^\mu_l(t) I^\mu_l(t)\Big] \nonumber\\
&    \geq (\rho-\epsilon)\E_{t_0} \Big[\suml \sumt (w_l(t_0) -F) I^\mu_l(t) \Big] \nonumber \\
&    \geq (\rho-\epsilon)\E_{t_0} \Big[\suml \sumt w_l(t_0) I^\mu_l(t)\Big] - C_3 \label{eq:optleft},
\end{align}
where $C_3=K\E[F^2]$ is a constant. Using \dref{eq:optleft}, \dref{eq:firsttermintermediate}, \dref{eq:usebounds},
\begin{align}
&    \E_{t_0}\Big[(t_0+F)-V(t_0)\Big] \nonumber \\
\leq&    C_4+ \suml \E[F] w_l(t_0) \lambda_l  -  (\rho-\epsilon)   \suml w_l(t_0)  \E_{t_0}\Big[\sumt I^\mu_l(t)\Big]  \nonumber \\
\leq& C_4 +  \suml w_l(t_0) \lp \lambda_l \E[F]-(\rho-\epsilon)\E_{t_0}\Big[\sumt I^\mu_l(t)\Big] \rp \nonumber\\
\leq& C_4 -\epsilon \E[F]\suml \lambda_l w_l(t_0)  \label{eq:bound1},
\end{align}
where $C_4= C_1 + C_2+C_3$,
and in the last inequality we have used (\ref{eq:lambdabound}).
Hence, given any $\delta>0$, $\E_{t_0}\lb V(t_0+F)-V(t_0) \rb \leq - \delta$ if $$\|w(t_0)\|\geq \max \lp (C_4+\delta)/(\epsilon \E[F]\lambda_{min}), W^\prime \rp,$$ 
where $\lambda_{min}=\min_l \lambda_l$. This proves that the network Markov chain is positive recurrent by the Foster-Lyapunov Theorem and further the stability in the mean sense~\dref{eq:strong stability} follows~\cite{meyn1992stability} (note that
the component $\Psi(t)$ lives in a finite state space).

\subsection{Gain Analysis of {\ALGONAME} in Collocated Networks} \label{sec:gain_collocated}
Consider a subclass $\ndpol$ of all the policies that schedule Non-Dominated (ND) links at each slot (recall Definition~\ref{def:dominance}). We refer to policies in $\ndpol$ as \textit{ND-policies}. We show that the optimal ND-policy is close to the optimal non-restricted policy as stated below.

\begin{lemma}  \label{lemma:baselemma}
Consider any policy $\ADV$ for scheduling packets in a frame $\frm$. Then there is an ND-policy $\ADVR \in \ndpol$ such that, under the same pattern $J \in \mathcal J(\frm)$ and initial state $\statez$, 
$$
\sumt \gain_{\ADVR}(t) \geq
\sumt \gain_{\ADV}(t) -  a_{max}F^2
$$
where $F$ is the length of the frame.  
\end{lemma}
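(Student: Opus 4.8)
The plan is to prove the statement by a sequence of local exchanges that progressively turn $\ADV$ into an ND-policy while controlling the cumulative loss of gain. By the preceding lemma I may assume both policies transmit the earliest-deadline packet of whichever single link they pick at each slot (the network is collocated), so a policy is just a choice of one link per slot. First I would record the two facts that drive everything: (i) the strict-dominance relation of Definition~\ref{def:dominance} is a strict partial order on the nonempty links — it is transitive because the deficit is monotone and the earliest deadline is anti-monotone along a dominance chain — so every link that $\ADV$ serves but which is \emph{dominated} is dominated by some \emph{non-dominated} link $l_1$; and (ii) over a frame the deficits move slowly, $|w_l(t)-w_l(t_0)|\le \maxpacks F$ for every $l$ and $t\in\frm$, which is exactly the envelope \dref{ineqweight}.

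The core exchange is as follows. I process the slots of $\frm$ in increasing order and let $t$ be the first slot at which the current policy serves a dominated link $l_2$; I pick a non-dominated $l_1$ dominating it, so $w_{l_1}(t)\ge w_{l_2}(t)$ and $e_{l_1}(t)\le e_{l_2}(t)$. I then modify the policy to serve $l_1$'s earliest packet $x_1$ at $t$ instead of $l_2$'s earliest packet $x_2$, and otherwise follow the old decisions, with one correction: the first time the old policy would have served $x_1$, at some slot $t'>t$, the new policy serves the leftover packet $x_2$ instead. The crucial feasibility check is that $x_2$ has not expired at $t'$: since $x_1$ is served at $t'$ we have $t'\le t+e_{l_1}(t)-1$, and $e_{l_1}(t)\le e_{l_2}(t)$ gives $t'\le t+e_{l_2}(t)-1$, the deadline of $x_2$. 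After $t'$ the two buffers recouple, each of $l_1,l_2$ having been served once in $[t,t']$, so the perturbation is confined to the window $[t,t']\subseteq\frm$ and to the deficits of $l_1,l_2$; if the old policy never serves $x_1$, the window extends only to the end of the frame and $x_2$ is simply dropped, which is harmless.

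It then remains to bound one exchange and to count them. Only slots $t$ and $t'$ change their served link, so the net change of $\sumt\gain(t)$ equals $[w_{l_1}(t)-w_{l_2}(t)]+[w_{l_2}^{\ADVR}(t')-w_{l_1}(t')]$; using $w_{l_2}^{\ADVR}(t')\ge w_{l_2}^{\ADV}(t')$ and the slow-variation bound (deficits move by at most $\maxpacks$ per slot over $t'-t\le F$ slots), this is at least $-\,\maxpacks F$, so each exchange costs on the order of $\maxpacks F$. Because the sweep is left-to-right and every correction reroutes service only to a \emph{later}, not-yet-processed slot, a processed slot is never revisited, so there are at most $F$ exchanges and the total loss is on the order of $\maxpacks F^2$, which yields $\sumt \gain_{\ADVR}(t)\ge \sumt \gain_{\ADV}(t)-\maxpacks F^2$ with $\ADVR\in\ndpol$. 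The main obstacle is the bookkeeping of this coupling rather than any single estimate: one must verify that ``the first slot at which the old policy serves $x_1$'' is well defined in the presence of new arrivals (a later arrival to $l_1$ with an even earlier deadline could be served before $x_1$ under the earliest-deadline rule), that rerouting to $x_2$ keeps the buffers identical outside the window so the perturbation truly stays in $\{l_1,l_2\}\times[t,t']$, and that the left-to-right sweep terminates in at most $F$ steps so that the bound stays quadratic.
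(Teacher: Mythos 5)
Your exchange argument is essentially the paper's own proof: repeatedly locate the first slot $t$ at which a dominated link $l_2$ is served, serve a dominating non-dominated link $l_1$ there instead, recouple the buffers by serving the leftover packet $x_2$ at the first later slot $t'$ where the old policy would have served $x_1$ (feasible precisely because $e_{l_1}(t)\le e_{l_2}(t)$), treat the case where $x_1$ is never served separately, and pay $O(a_{max}F)$ per exchange over at most $F$ exchanges. One bookkeeping slip: the net change of $\sumt \gain(t)$ does not \emph{equal} the two slot terms you write, because the exchange shifts $w_{l_1}$ down by one and $w_{l_2}$ up by one throughout $(t,t')$, so every intermediate slot at which either of these links is served also changes its gain; the paper tracks exactly these contributions via the counts $I_y(t_0+1,t_a-1)$ and $I_x(t_0+1,t_a-1)$ in \dref{eq:gaindif}. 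Since those omitted terms are bounded below by $-F$ per exchange, your bound survives at order $a_{max}F^2$ (only the constant degrades, which is immaterial downstream because this error term is divided by a gain that tends to infinity as $\|w(t_0)\|\to\infty$).
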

\begin{proof}
 Suppose the first time $\ADV$ does not schedule a non-dominated link is $t_0$. Suppose $\ADV$ sends earliest-deadline packet $(w_y(t_0),d_y)$ from link $y$ and $(w_x(t_0),d_x)$ be the earliest-deadline packet at a link $x$ ($x\neq y$) that strictly dominates $y$, i.e. $w_x(t_0) \geq w_y(t_0)$, $d_x \leq d_y$. Consider some alternative policy $\ADV'$ which has the same transmissions as $\ADV$ up to time $t_0$ but transmits the packet of $x$ at time $t_0$ instead. Let $w_l'(t)$, $ l\in \users$ denote the link deficits under $\ADV'$. Note that $w'_l(t) = w_l(t), \ \forall t\leq t_0$.
We differentiate between 2 cases:

\begin{enumerate}[leftmargin=*]
    \item $\ADV$ does not transmit packet $x$ in the remaining time slots. In this case, let $\ADV'$ transmit the same packets as $\ADV$ in the remaining slots (after $t_0$).
    Let $I_l(t_1,t_2)=\sum_{t=t_1}^{t_2}I_l(t)$ be the number of packets transmitted between  $t_1$ and $t_2$ at link $l$ under $\ADV$ (and subsequently under $\ADV'$). And let $\Delta \gain:=\sumt \gain_{\ADV'}(t) - \sumt \gain_{\ADV}(t)$.  Then we have
    \begin{flalign*}
    &\Delta \gain \overset{(a)}=w_x(t_0)+I_y(t_0+1,F) - \lp w_y(t_0)+I_x(t_0+1,F)  \rp  \\ 
    &\overset{(b)} \geq w_x(t_0) - w_y(t_0) -F \geq -F
\end{flalign*} 
    To see $(a)$, notice that as a result of transmitting from link $x$ instead of link $y$, the deficit of link $y$ under $\ADV'$ will be one more than that under $\ADV$ at any time $t >t_0$. Similarly, the deficit of link $x$ under $\ADV'$ will be one less than that under $\ADV$ at any time $t >t_0$. In $(b)$, we have used the fact that $I_l(t)\in \{0,1\}$ and $w_x(t_0)\geq w_y(t_0)$.
    
    \item $\ADV$ transmits packet $x$ at some time slot $t_a$ where $t_0<t_a<t_0+d_x$. In this case we let $\ADV'$ transmit the same packets as $\ADV$ for all $t>t_0$ except for time slot $t_a$ in which it transmits packet $y$ instead, which still has not expired yet by the domination inequality $d_y \geq d_x$. 
     It is easy to check that
     \be
     &\sumt \gain_{\ADV'}(t) - \sumt \gain_{\ADV}(t) = \nonumber \\  
     &w_x(t_0)+w_y'(t_a) +I_y(t_0+1,t_a-1) \nonumber \\
     &- w_y(t_0)-w_x(t_a)-I_x(t_0+1,t_a-1) \label{eq:gaindif}
     \ee
 The total deficit arrival to a link in the frame cannot be more than $a_{\max}F$. Hence,
     \begin{align*}
     w_x(t_a)  
           &\leq w_x(t_0) + a_{\max}F - I_x(t_0,t_a-1)\\
     w_y'(t_a) &\geq w_y(t_0) - I_y(t_0,t_a-1)       
     \end{align*}
   Using these two inequalities in \dref{eq:gaindif} yields
    \be \label{eq:samplegain}
    & \sumt \gain_{\ADV'}(t) - \sumt \gain_{\ADV}(t) \geq 
    -a_{max}F.
    \ee
\end{enumerate}
    By repeating this process (at most $F$ times), we can transform $\ADV$ to $\ADVR$. From this, the final result follows. 
\end{proof}

\begin{lemma}\label{lemma:boundsadvalg}
For each slot $t\in \frm$, the gain obtained by \ALGONAME, and the amortized gain by any ND-policy $\ADVR$, starting from some state $\statet$ satisfy:
\be
\ER[\again_{\ADVR}(t)|\statet] &\leq& w_{max}(t) + \mathcal E_0 \label{eq200}\\
 \ER[\gain_{\ALGONAME}(t)|\statet] &\geq& w_{max}(t) \rho \label{eq300}
\ee
where  $\rho={\lp 1- \lp 1-\frac{1}{K}\rp^K \rp}$ and $\mathcal{E}_0= a_{max}\maxdead + 2F$, and $\ER[\cdot]$ is expectation with respect to the random decisions of {\ALGONAME}. 
\end{lemma}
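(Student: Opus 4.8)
I would prove the two inequalities by completely different means: \dref{eq300} is a self-contained extremal computation over the probabilities produced by Algorithm~\ref{alg:collocated}, while \dref{eq200} rests on the coupling/amortization that compares the ND-policy $\ADVR$ to \ALGONAME. Since the network is collocated, exactly one link transmits per slot, so the realized gain of any policy in a slot is simply the deficit of the single link it serves; this observation underlies both bounds.

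\textbf{Lower bound \dref{eq300}.} Writing $w_i := w_{h_i}(t)$ for the ordered non-dominated deficits $w_1 > \cdots > w_k$, and noting that the largest-deficit nonempty link is itself non-dominated (no link can strictly exceed the maximum deficit), we have $w_1 = w_{max}(t)$ and
\[
\ER[\gain_{\ALGONAME}(t)\mid \statet] = \sum_{i=1}^{k} p_{h_i}(t)\, w_i .
\]
First I would argue that the $\min(\cdot,r)$ truncation in Algorithm~\ref{alg:collocated} only shortens the effective list: if the budget is exhausted at index $j^\star\le k$, the assigned probabilities are exactly those the algorithm would produce on $\{h_1,\dots,h_{j^\star}\}$, where $p_{h_i}=1-w_{i+1}/w_i$ for $i<j^\star$ and $p_{h_{j^\star}}$ receives the leftover mass. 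The sum then telescopes to $(w_1-w_{j^\star}) + p_{h_{j^\star}}(t)\,w_{j^\star}$. Substituting the ratios $x_i := w_{i+1}/w_i \in (0,1)$, so that $w_{j^\star}/w_1 = \prod_{i<j^\star}x_i$, a short computation rewrites the normalized gain as
\[
\frac{\ER[\gain_{\ALGONAME}(t)\mid \statet]}{w_{max}(t)} = 1 - \Big(\prod_{i=1}^{j^\star-1} x_i\Big)\sum_{i=1}^{j^\star-1}(1-x_i).
\]
The task reduces to maximizing $g(x)=\big(\prod_i x_i\big)\big(\sum_i(1-x_i)\big)$ over $(0,1)^{j^\star-1}$. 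The unique interior stationary point is $x_i \equiv (j^\star-1)/j^\star$ (it satisfies $x_j=\sum_i(1-x_i)$ for all $j$), and since $g\to 0$ on the boundary this is the global maximum, with value $(1-1/j^\star)^{j^\star}$. Hence the normalized gain is at least $1-(1-1/j^\star)^{j^\star}$; because $(1-1/n)^n$ is increasing in $n$ and $j^\star \le k \le K$, this is at least $1-(1-1/K)^K = \rho$, giving \dref{eq300}.

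\textbf{Upper bound \dref{eq200}.} Here I would invoke the coupling sketched in the Gain Analysis overview, under which $\ADVR$ and \ALGONAME\ run on the common state $\statet$ and $\ADVR$'s configuration is repaired to match \ALGONAME's after the slot, so that $\again_{\ADVR}(t)$ equals $\ADVR$'s honest gain plus the compensation granted during the repair. The honest per-slot gain is at most $w_{max}(t)$, since $\ADVR$ serves a single nonempty link whose deficit is at most the maximum deficit. It then remains to bound the compensation $\again_{\ADVR}(t)-\gain_{\ADVR}(t)$ by $\mathcal E_0 = a_{max}\maxdead + 2F$: the buffer reconciliation can touch at most $a_{max}\maxdead$ packets per link (a buffer holds at most $a_{max}$ packets at each of the $\maxdead$ deadline levels), while the deficit reconciliation is governed by the deficit drift over the frame, bounded by $O(F)$ because each deficit moves by at most one per slot. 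Since these bounds are deterministic, they survive taking $\ER[\cdot\mid\statet]$.

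\textbf{Main obstacle.} The lower bound is the quantitative heart but is essentially a clean calculus identity, $\max g = (1-1/j^\star)^{j^\star}$, which is precisely the source of the $(e-1)/e$ constant. I expect the real difficulty to lie in the upper bound, namely in pinning down the amortized quantity $\again_{\ADVR}(t)$ and verifying that the repair step never credits $\ADVR$ more than $\mathcal E_0$ beyond $w_{max}(t)$; the careful accounting of the buffer and deficit discrepancies (and confirming the precise constants $a_{max}\maxdead$ and $2F$) is where the coupling construction must be used in full.
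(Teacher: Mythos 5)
Your lower bound \dref{eq300} is correct and essentially the paper's own argument: you use the same telescoping identity to reduce the normalized gain to $1-\bigl(\prod_i x_i\bigr)\sum_i(1-x_i)$, and where you maximize this by calculus the paper applies the AM--GM inequality to the $B$ terms $(1-p_{h_1}),\dots,(1-p_{h_{B-1}}),\ \sum_{i}p_{h_i}$ to reach the same value $\bigl(\tfrac{B-1}{B}\bigr)^{B}$; that difference is cosmetic.

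The upper bound \dref{eq200} has a genuine gap. You claim the compensation $\again_{\ADVR}(t)-\gain_{\ADVR}(t)$ is \emph{deterministically} at most $\mathcal E_0=a_{max}\maxdead+2F$, attributing $a_{max}\maxdead$ to buffer reconciliation and $O(F)$ to deficit drift. But in the coupling the buffer repair is not a small perturbation: when $\ADVR$ transmits link $h_i$ while \ALGONAME\ transmits $h_j$ with $w_{h_j}(t)>w_{h_i}(t)$, the repair lets $\ADVR$ \emph{additionally transmit} the packet of link $h_j$, crediting it the full deficit $w_{h_j}(t)$, which can be as large as $w_{max}(t)$ itself. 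So the realized amortized gain can reach roughly $w_{h_i}(t)+w_{h_1}(t)+F$, far exceeding $w_{max}(t)+\mathcal E_0$. Inequality \dref{eq200} holds only after averaging over \ALGONAME's random choice, and only because of the specific form of the probabilities: the paper bounds $\ER[{\again_{\ADVR}}^{(h_i)}(t)\mid\statet]\le w_{h_i}(t)+\sum_{j<i}p_{h_j}(t)\,w_{h_j}(t)+\mathcal E_0$, and with $p_{h_j}(t)=1-w_{h_{j+1}}(t)/w_{h_j}(t)$ the middle sum telescopes to give $w_{h_i}(t)+\sum_{j<i}p_{h_j}(t)\,w_{h_j}(t)\le w_{h_1}(t)=w_{max}(t)$. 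Your sketch never invokes the form of the probabilities in the upper bound, yet this is exactly where they are needed---\dref{eq200} is the inequality the probabilities were designed to make true. Your instinct that the difficulty lies in this half was right, but the accounting you propose (counting packets touched rather than their deficit-weighted contribution) would not close it.
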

\begin{proof}
At time $t$, after the new arrivals have happened, we have state $\statet$. \ALGONAME\ decides probabilistically to transmit a packet $(w_f,e_f)$ from a non-dominated link $f \in \nondominated(t)$, and the ND-policy $\ADVR$ transmits a packet $(w_z,e_z)$ from some other link $z$ . We distinguish two cases following the same method as in \cite{jez2011one} but for time-varying weights.

\begin{enumerate}[leftmargin=*]
    \item $e_f \leq e_z, w_f \leq w_z$:
To maintain the same buffers for both algorithms, we remove the packet $e_f$ from the buffer of link $f$ under $\ADVR$ and inject the packet with deadline $e_z$ to link $z$ so that $\ADVR$ gets {a packet with higher deadline and higher weight at the time $t$. 
Since both packets will expire in at most $\maxdead$ slots, the deficit of $f$ can only increase by at most $\maxdead \maxpacks$ before packet $e_f$ expires. Therefore giving $\ADVR$ this additional compensation will guarantee that the modification is advantageous}.  Further, we decrease the deficit from link $f$ by one ($w_f-1$ in $\ADVR$) and we increase the deficit of link $z$ by one ($w_z+1$ in $\ADVR$). Then $\ADVR$ and {\ALGONAME} have the same exact state.
    Making this change in the deficit will reduce the gain for each packet transmitted from link $f$ in the future  by one. To compensate for this, we give $\ADVR$ extra gain which is the number of packets transmitted from link $f$ for the rest of the frame, which is less than $F$.     Hence, the total compensation is bounded by   $F+a_{max}\maxdead.$
    \item $e_z \leq e_f, w_z \leq w_f$: In this case, we allow $\ADVR$ to additionally transmit the packet $e_f$ at time $t$, and inject a copy of packet $e_z$ to the buffer of link $z$. This makes the buffers identical, but results in the decrease of deficit of link $f$ by one, which might not be advantageous for $\ADVR$ for future times.  
    To guarantee that the change is advantageous for $\ADVR$, we give it one extra reward for each possible transmission from link $f$ in the rest of the frame, which is less than $F$. 
\end{enumerate}

Let ${\again_{\ADVR}}^{(h_i)}(t)$ denote the reward (including the compensation) gained by  $\ADVR$ when it transmits a non-dominated packet $h_i$ (recall $h_i$ from Algorithm \ref{alg:nondominated}). Then 
\begin{align}
\ER [{\again_{\ADVR}}^{(h_i)}(t)|\state^t] 
=& \sum_{h_j: j<i}p_{h_j}(t) \lp w_{h_j}(t) + F \rp \nonumber \\
&+w_{h_i}(t) + F+a_{max}\maxdead  \nonumber \\
\leq &w_{h_i}(t) + \sum_{h_j: j<i}p_{h_j}(t) w_{h_j}(t)  +\mathcal E_0 \label{eq:bound-adv-collnet}
\end{align}
where $\mathcal E_0=a_{max}\maxdead +2F$. Using the assigned probabilities (line \ref{alg-eq:assign-probs-col} in Algorithm \ref{alg:collocated}), it is easy to verify that \dref{eq:bound-adv-collnet} attains its maximum for $i=1$, which is equal to $w_{h_1}(t)+\mathcal E_0=w_{max}(t)+\mathcal E_0$. Hence, \dref{eq200} indeed holds.

Now regarding {\ALGONAME}, similar derivation applies as in \cite{jez2013universal} to get the final bound. To see that, first let the number of links with positive probability be $B\leq K$. Then
\begin{flalign*}
&\ER[\gain_{\ALGONAME}(t)|\state^t]  =
\sum_{i\in [B]} w_{h_i}(t)p_{h_i}(t)= \\
& \sum_{i\in [B-1]} w_{h_i}(t)p_{h_i}(t) +
\Big(1- \sum_{i \in [B-1]} p_{h_i}(t) \Big) w_{h_B}(t) \overset{(a)}= \\
&w_{h_1}(t) \Big( 1- \prod _{i=1}^{B-1}(1-p_{h_i}(t) ) \sum_{i=1}^{B-1}p_{h_i}(t)\Big) \overset{(b)} \geq \\
&w_{h_1}(t) \Big(1-\big(\frac{B-1} B\big)^B\Big), 
\end{flalign*}
where $(a)$ follows from the form of probabilities, 
and $(b)$ follows by applying the inequality between arithmetic and geometric means of $B$ terms: $(1-p_{h_i}(t)),\ i\in[B-1]$, and $\sum_{i=1}^{B-1}p_{h_i}(t)$.
\end{proof}

\begin{lemma} \label{lemmas0} 
Over any frame $\frm$, with initial state $\statez=(\Psi(t_0),w(t_0),\trafficstate(t_0))$, and  any ND-policy $\ADVR$. 
\be 
\lim_{\|w(t_0)\|\to \infty} \frac {\ERJ[\sumt \gain_{\ALGONAME}(t)  | \statez]}{\EJ[\sumt \gain_{\ADVR}(t) | \statez]} \geq \NEW{\rho}
\ee
 \end{lemma}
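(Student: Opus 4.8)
The plan is to lift the per-slot estimates of Lemma~\ref{lemma:boundsadvalg} to the whole frame and then let the initial deficit grow. First I would invoke the coupling set up in Lemma~\ref{lemma:boundsadvalg}: we run {\ALGONAME} together with a modified copy of the ND-policy $\ADVR$ whose buffers and deficits are forced to agree with those of {\ALGONAME} at every slot, paying $\ADVR$ the amortized gain $\again_{\ADVR}(t)$ for these modifications. Because each modification is constructed to be advantageous for $\ADVR$, the amortized gain dominates the gain $\ADVR$ would have collected on its own (uncoupled) trajectory, so for every realization of the traffic pattern $J$ and of the randomness of {\ALGONAME},
\[\sumt \gain_{\ADVR}(t) \le \sumt \again_{\ADVR}(t).\]
Taking expectations (the left-hand side does not depend on the randomness of {\ALGONAME}) gives $\EJ[\sumt \gain_{\ADVR}(t)\mid\statez] \le \ERJ[\sumt \again_{\ADVR}(t)\mid\statez]$, which reduces the claim to comparing two quantities that are both controlled by $w_{max}(t)$.

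Next I would sum the single-slot bounds over the frame. Conditioning on $J$ fixes the frame length $F$, so the number of summands is deterministic and the tower property applies cleanly to the conditional bounds \dref{eq200} and \dref{eq300}. Writing $S := \ERJ[\sumt w_{max}(t)\mid \statez]$, bound \dref{eq300} gives $\ERJ[\sumt \gain_{\ALGONAME}(t)\mid\statez] \ge \rho\, S$, while bound \dref{eq200} combined with the amortization inequality of the first paragraph gives $\EJ[\sumt \gain_{\ADVR}(t)\mid\statez] \le S + \EJ[F\,\extracompcol]$. Since $\extracompcol = a_{\max}\maxdead + 2F$ and all moments of $F$ are finite, the error term equals a finite constant $C := a_{\max}\maxdead\,\E[F] + 2\E[F^2]$ that does \emph{not} depend on $\|w(t_0)\|$. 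Hence the ratio of interest is bounded below by $\rho S / (S + C)$.

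Finally I would drive $\|w(t_0)\|\to\infty$. The conclusion $\rho S/(S+C)\to\rho$ follows once $S\to\infty$, which is the one quantitative point requiring a short argument: the maximum deficit among nonempty links decreases by at most one per slot, so a link carrying an $\Omega(\|w(t_0)\|)$ share of the deficit retains a deficit of that order throughout the bounded-length frame, and $S$ therefore grows at least linearly in $\|w(t_0)\|$ while $C$ stays fixed. I expect the main obstacle to be making this last step fully rigorous, in particular ensuring that the high-deficit link actually contributes to $w_{max}(t)$ (i.e.\ is nonempty on a positive-length portion of the frame), which is where positive recurrence of the traffic chain and the fact that deficit arrivals accompany genuine packet arrivals are used; the amortization step of the first paragraph is the conceptual crux but is already delivered by Lemma~\ref{lemma:boundsadvalg}.
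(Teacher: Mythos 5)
Your proposal is correct and follows essentially the same route as the paper's proof: lift the per-slot bounds of Lemma~\ref{lemma:boundsadvalg} to the frame via the tower property (conditioning on the pattern $J$ to fix $F$), absorb the compensation terms into a constant $a_{\max}\maxdead\E[F]+2\E[F^2]$ independent of $\|w(t_0)\|$, and conclude once $\ERJ[\sumt w_{max}(t)\mid\statez]\to\infty$. The one step you flag as needing care is resolved in the paper exactly as you anticipate, via the pointwise bound $w_{max}(t)\geq w_{l_1}(t)\,a_{l_1}(t)/\maxpacks$ for the initially largest-deficit link $l_1$ together with the renewal reward theorem, which gives the lower bound $\frac{\|w(t_0)\|}{K}\E[F]\frac{\overline{a}_{l_1}}{\maxpacks}-\E[F^2]$.
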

\begin{proof}
Given the initial state $\statez$ and frame size $F$, consider all the traffic patterns of length $F$. Taking expectations of the result of Lemma \ref{lemma:boundsadvalg}, with respect to random traffic patterns $J$ of length $F$, we get 
\NEW{
\[ \ERJ[\ER[\again_{\ADVR}(t)|\statet]| \statez,F]
\leq \ERJ[w_{max}(t) | \statez,F] + {\mathcal E_0}\]
\[
 \ERJ[\ER[\gain_{\ALG}(t)|\statet]| \statez,F]
\geq \ERJ[w_{max}(t)|\statez,F] \rho
\]
}
where \ALG\ = {\ALGONAME}. Now notice that
\NEW{
\begin{align*}
&\ERJ[\ER[\again_{\ADVR}(t)|\statet]| \statez,F]=\\
& \ERJ[\ER[\again_{\ADVR}(t)|\statet, \statez]| \statez,F]=  \ERJ[\again_{\ADVR}(t)|\statez,F]
\end{align*}
}
where the first equality is due to the fact that, given $\statet$ and $F$, the gain of $\ADVR$ at time $t$ depends on current $\statet$ and future traffic pattern in the frame, but not on the past. The second equality is by the tower property of conditional expectation.
Therefore, we get
\begin{equation} \label{eq:amrtgbndedzero}
\ERJ[\again_{\ADVR}(t)|\statez,F]
\leq \ERJ[w_{max}(t) | \statez,F] + {\mathcal E}_0
\end{equation}
\noindent 
Using similar arguments for the expected gain of \ALGONAME,
\begin{equation} \label{eq:gainbndedzero}
 \ERJ[\gain_{\ALG}(t)|\statez,F]
\geq \ERJ[w_{max}(t)|\statez,F] \rho.
\end{equation}
 Summing the gains over time slots in the frame,
 we have
 \begin{align*}
 &\EJ \Big[ \sum_{t=t_0}^{F} \gain_{\ADVR}(t) | \statez,F \Big] \leq \ERJ \Big[ \sum_{t=t_0}^{F} \again_{\ADVR}(t)| \statez ,F \Big]  \\
 &\leq \ERJ \Big[\sum_{t=t_0}^{t_0+F} w_{max}(t) | \statez,F\Big] + \mathcal E_0 F
 \end{align*}
 and taking the expectation with respect to frame size $F$,
 \begin{align}\label{eq1}
 &\EJ \Big[ \sumt \gain_{\ADVR}(t) | \statez \Big] \leq \ERJ[\sumt w_{max}(t) | \statez] + \bar{\mathcal E} 
 \end{align}
 where $\bar{\mathcal E}=a_{max}\E[F]\maxdead+2\E[F^2]$. Similarly,   
 \begin{gather} \label{eq2}
 \ERJ \Big[\sumt \gain_{\ALGONAME}(t)  | \statez \Big] \geq \rho \ERJ \Big[\sumt w_{max}(t)|\statez \Big]
 \end{gather}
 Now consider link $l_1$ that has the maximum deficit at time $t_0$. At any time $t\in \mathcal{F}$,
 \[w_{l_1}(t_0)+a_{max}F \geq w_{l_1}(t) \geq w_{l_1}(t_0)-F. \]
 Recall that $w_{max}(t)$ denotes the maximum deficit among the nonempty links, and $a_{l_1(t)}> 0$ implies that the link $l_1$'s buffer is nonempty at time $t$. Therefore
  \begin{eqnarray}
  w_{max}(t) \geq w_{l_1}(t) \mathds{1}(a_{l_1}(t)>0) \geq w_{l_1}(t) \frac{a_{l_1}(t)}{\maxpacks}. 
  \end{eqnarray}
 Hence,
 \begin{align}
&  \ERJ[\sumt w_{max}(t)|\statez] \geq \ERJ \Big[ \sumt w_{l_1}(t)\frac{a_{l_1}(t)}{\maxpacks} \big| \statez\Big]
   \nonumber\\
 & \geq \frac{1}{\maxpacks}\ERJ \Big[ (w_{l_1}(t_0)-F) \sumt a_{l_1}(t)\big| \statez \Big] \nonumber \\
&  \geq \frac{\|w(t_0)\|}{K}\E[F] \frac{\overline{a}_{l_1}}{{\maxpacks}}-\E[F^2]  \label{eq:arg-wmax-bound}
 \end{align}
 and therefore 
 $$\lim_{\|w(t_0)\| \to \infty} \ERJ \Big[\sumt w_{max}(t)|\statez \Big] = \infty.
 $$
 Using this and \dref{eq1} and \dref{eq2}, the result follows.
 From which it follows that
 \ben
  \frac {\ERJ[\sumt \gain_{\ALG}(t)  | \statez]}{\EJ[\sumt \gain_{\ADVR}(t) | \statez]} \geq \rho-\epsilon 
 \een
 as $\|w(t_0)\| \to \infty$.
\end{proof}
\begin{theorem} \label{mainthm}
For any policy $\ADV$, and \ALGONAME, given any $\epsilon>0$, there is $W'$ such that when $\|w(t_0)\| \geq W'$:
\[\E_{\statez} \Big[ \sumt \gain_{\ALGONAME}(t) \Big] \geq
(\rho-\epsilon) \E_{\statez} \Big[ \sumt \gain_{\ADV}(t) \Big] 
\]

\end{theorem}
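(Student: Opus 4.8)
The plan is to chain the three preceding lemmas: use Lemma~\ref{lemma:baselemma} to reduce the arbitrary policy $\ADV$ to a non-dominated policy at only additive cost, and then invoke Lemma~\ref{lemmas0} (together with the absolute bounds established inside its proof) to compare $\ALGONAME$ against that ND-policy. First I would apply Lemma~\ref{lemma:baselemma}: for the given $\ADV$ there is an ND-policy $\ADVR\in\ndpol$ with $\sumt\gain_{\ADVR}(t)\geq\sumt\gain_{\ADV}(t)-a_{max}F^2$ pathwise on every pattern $J\in\mathcal J(\frm)$. Since $\ADVR$ is determined by the pattern while $\ALGONAME$ carries its own randomness $R$, taking $\E_{\statez}[\cdot]$ and using that all moments of $F$ are finite yields $\E_{\statez}[\sumt\gain_{\ADV}(t)]\leq\EJ[\sumt\gain_{\ADVR}(t)\mid\statez]+a_{max}\E[F^2]$. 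It then suffices to control $\ADVR$.

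Rather than divide the multiplicative conclusion of Lemma~\ref{lemmas0} by this additive reduction (which would mix the two kinds of error awkwardly), I would work directly with the two absolute inequalities behind that lemma. Writing $S:=\ERJ[\sumt w_{max}(t)\mid\statez]$, equations~\dref{eq1} and~\dref{eq2} give $\EJ[\sumt\gain_{\ADVR}(t)\mid\statez]\leq S+\bar{\mathcal E}$ and $\ERJ[\sumt\gain_{\ALGONAME}(t)\mid\statez]\geq\rho\,S$, where $\rho=1-(1-\tfrac1K)^K$. Combining with the reduction step and setting the constant $C:=\bar{\mathcal E}+a_{max}\E[F^2]$, I obtain
\[
\frac{\E_{\statez}[\sumt\gain_{\ALGONAME}(t)]}{\E_{\statez}[\sumt\gain_{\ADV}(t)]}\;\geq\;\frac{\rho\,S}{S+C}.
\]
This bound holds uniformly over $\ADV$: if the benchmark gain is small the ratio is only larger, so only the regime where $\E_{\statez}[\sumt\gain_{\ADV}(t)]$ is comparable to $S$ is binding.

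The final step is to let $\|w(t_0)\|\to\infty$. By~\dref{eq:arg-wmax-bound} we have $S\geq \frac{\|w(t_0)\|}{K}\E[F]\frac{\overline a_{l_1}}{a_{max}}-\E[F^2]\to\infty$, so $\rho S/(S+C)\to\rho$; hence for any $\epsilon>0$ one can choose $W'$ large enough that the ratio exceeds $\rho-\epsilon$ whenever $\|w(t_0)\|\geq W'$, which is exactly the claim. I expect the main obstacle to be precisely this divergence of the benchmark $S$, which is what forces the fixed overheads — the $a_{max}F^2$ loss from the ND-reduction, the amortization compensations $\mathcal E_0$ and $\bar{\mathcal E}$, and the frame-boundary packet drops — to become negligible relative to the gains. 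It rests on~\dref{eq:arg-wmax-bound}, where the large deficit of the maximum link $l_1$ is converted into $\Theta(\|w(t_0)\|)$ transmission opportunities through its positive arrival rate $\overline a_{l_1}$; the delicate point is that a link sustaining an unboundedly large deficit must be fed packets at a strictly positive rate, so that $w_{max}(t)$ genuinely scales with $\|w(t_0)\|$ throughout the frame and the constant $C$ is washed out in the limit.
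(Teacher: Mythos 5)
Your proof is correct and follows essentially the same route as the paper: reduce $\ADV$ to an ND-policy via Lemma~\ref{lemma:baselemma}, then compare \ALGONAME\ to that ND-policy through $S=\ERJ[\sumt w_{max}(t)\mid\statez]$, and let $S$ diverge via \dref{eq:arg-wmax-bound} so the additive overheads wash out. The only (harmless) difference is that you splice the absolute bounds \dref{eq1}--\dref{eq2} from inside the proof of Lemma~\ref{lemmas0} into a single ratio $\rho S/(S+C)$, whereas the paper chains two separate limits (a squeeze argument giving $\E_{t_0}[\sumt\gain_{\ADVR}(t)]/\E_{t_0}[\sumt\gain_{\ADV}(t)]\to 1$, followed by the stated conclusion of Lemma~\ref{lemmas0}).
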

\begin{proof}
 Using Lemma \ref{lemma:baselemma} for the optimal $\ADV$ over a frame $\frm$, and the fact that $\ADV$ is at least as effective as $\ADVR$
 \ben
   \E_{t_0}[\sumt \gain_{\ADV}(t)] \geq \E_{t_0}[\sumt \gain_{\ADVR}(t)] \geq \E_{t_0} [\sumt \gain_{\ADV}(t)] - a_{max} \E[F^2]
 \een
 Dividing by $\E_{t_0}[\sumt \gain_{\ADV}(t)]$
 and taking limits as $\|w(t_0)\| \rightarrow \infty$, the squeeze limits theorem yields: 
 \be \label{eq:limit2}
     \frac { \E_{t_0}[\sumt \gain_{{\ADVR}}(t)]}  
      {\E_{t_0}[\sumt \gain_{\ADV}(t)]} \rightarrow 1
 \ee
since,\NEW{as we showed in the proof of Lemma~\ref{lemmas0}}, $\E_{t_0}[\sumt \gain_{\ADV}(t)] \to \infty$, as $\|w(t_0)\| \to \infty$.
 Using~\dref{eq:limit2} and Lemma~\ref{lemmas0}, the result follows.
\end{proof}

\subsection{Gain Analysis of {\MPALGONAME} in General Networks}\label{sec:gain_general}
First we show that binary search in Algorithm \ref{alg:general} suffices for computing $\nstar$ defined in (\ref{def:nstar}).
\begin{proposition}
The binary search in Algorithm \ref{alg:general} computes $\nstar$ as defined in (\ref{def:nstar}).
\end{proposition}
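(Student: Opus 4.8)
The plan is to reduce correctness of the binary search to a single \emph{monotonicity} property of its test predicate, and then verify a standard loop invariant. Throughout, fix the time $t$, abbreviate $v_i := \misw{i}$ (with all $v_i>0$, so that the reciprocals and subharmonic averages are well defined), and recall the ordering $v_1 \ge v_2 \ge \cdots \ge v_R > 0$ where $R=\numbMIS$. From the two facts already recorded just before \dref{def:nstar}—namely $\sum_{i\le n} p^n_i(t) = 1$ for every $n$, and $p^n_i(t)$ nonincreasing in $i$ because $v_i$ is nonincreasing—the family $\{p^n_i(t)\}_{i\le n}$ is a valid distribution if and only if its smallest entry $p^n_n(t)$ is nonnegative, which is exactly the characterization $\nstar = \max\{n : p^n_n(t)\ge 0\}$ in \dref{def:nstar}. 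Since $\subhrm{1}=0$ gives $p^1_1(t)=1\ge 0$, the set $\{n:p^n_n(t)\ge 0\}$ is nonempty and $\nstar$ is well defined. The routine in Algorithm~\ref{alg:general} is the textbook search for the largest index at which a Boolean predicate holds, and it is correct \emph{provided} the predicate $g(n):=\mathds{1}\big(p^n_n(t)\ge 0\big)$ is prefix-closed, i.e. $g(n)=1$ for all $n\le \nstar$ and $g(n)=0$ for all $n>\nstar$. Hence the whole statement reduces to establishing this monotonicity.

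The key step, and the one I expect to be the main obstacle, is the monotonicity claim: \emph{if $p^n_n(t)<0$ then $p^{n+1}_{n+1}(t)<0$}. I would prove it directly from the definition of the subharmonic average. Writing $S_n:=\sum_{i=1}^n v_i^{-1}$, we have $\subhrm{n}=(n-1)/S_n$, so that
\[
p^n_n(t)<0 \iff \subhrm{n}>v_n \iff S_n < \frac{n-1}{v_n}.
\]
Adding the next reciprocal and using $v_{n+1}\le v_n$,
\[
S_{n+1}=S_n+\frac{1}{v_{n+1}} < \frac{n-1}{v_n}+\frac{1}{v_{n+1}} \le \frac{n-1}{v_{n+1}}+\frac{1}{v_{n+1}}=\frac{n}{v_{n+1}},
\]
which rearranges to $\subhrm{n+1}=n/S_{n+1}>v_{n+1}$, that is, $p^{n+1}_{n+1}(t)<0$. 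Iterating this implication shows $p^m_m(t)<0$ for every $m\ge n$ once it fails at $n$, so $\{n:p^n_n(t)\ge 0\}=\{1,\dots,\nstar\}$ is an initial segment. The subtlety to watch here is that the middle inequality goes the right way precisely because $v_{n+1}\le v_n$; this is exactly where sorting the maximal independent sets by decreasing weight is used.

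Finally, with prefix-closedness in hand, I would verify the loop invariant $n_1\le \nstar\le n_2$. Initially $n_1=1\le\nstar$ (since $g(1)=1$) and $n_2=R\ge\nstar$. Whenever $n_1<n_2$, the value $n=\ceil{(n_1+n_2)/2}$ satisfies $n_1<n\le n_2$. If $p^n_n(t)\ge 0$ then $n\le\nstar$ by the definition of $\nstar$, so setting $n_1\gets n$ preserves $n_1\le\nstar$ and strictly increases $n_1$; if $p^n_n(t)<0$ then $n>\nstar$ by prefix-closedness, so $\nstar\le n-1$ and setting $n_2\gets n-1$ preserves $\nstar\le n_2$ and strictly decreases $n_2$. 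Since the interval $[n_1,n_2]$ strictly shrinks each round, the loop terminates with $n_1=n_2$, and the invariant forces this common value to equal $\nstar$, which is the returned value. This completes the plan; the only genuinely nontrivial ingredient is the monotonicity lemma of the second paragraph, and everything else is bookkeeping around the two facts already established before \dref{def:nstar}.
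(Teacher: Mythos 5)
Your proposal is correct and follows essentially the same route as the paper: both reduce correctness of the binary search to the prefix-closedness of the predicate $p^n_n(t)\ge 0$, and both establish the key implication ($p^n_n(t)<0 \Rightarrow p^{n+1}_{n+1}(t)<0$) from the ordering $\misw{n+1}\le \misw{n}$ via the same reciprocal-sum manipulation. The only difference is cosmetic—you argue the implication directly while the paper phrases it as a contradiction—plus you spell out the loop invariant, which the paper leaves implicit.
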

\begin{proof}
 Assume that for some $n$, $p_n^n(t)\geq 0$. In this case we know that $\nstar \geq n$ since $n$ satisfies  (\ref{def:nstar}). Now assume that $p_{n}^{n}(t)<0$. Then we claim that we can conclude $\nstar<n$, or equivalently $p_{n'}^{n'}(t)<0$ for any $n'>n$.
 It suffices to prove that $p_{n}^{n}(t)<0$ implies $p_{n+1}^{n+1}(t)<0$, from which inductively the claim follows.
 To arrive at a contradiction, assume $p_{n}^{n}(t)<0$, $p_{n+1}^{n+1}(t)\geq 0$, or equivalently $(a)$: $\subhrm{n} > \misw{n}$ and $(b)$: $\subhrm{n+1} \leq \misw{n+1}$.
 Then
 \be
 &\frac{1}{\misw{n+1}} - \frac{1}{n \misw{n+1}} \overset{(b')}{\leq} \frac{1}{\subhrm{n+1}} -  \frac{1}{n\misw{n+1}} = \nonumber\\
 &\frac{\sumd{n+1} \misw{i}^{-1}}{n}
  - \frac{1}{n \misw{n+1}}
 = \frac{\sumd{n} \misw{i}^{-1}}{n} = \nonumber \\
 &\frac{n-1}{n}\frac{\sumd{n} \misw{i}^{-1}}{n-1} 
 \overset{(a')}{<} \frac{n-1}{n} \frac {1}{\misw{n}}, \nonumber
 \ee
 where in $(a')$ we used $(a)$ and in $(b')$ we used $(b)$. This shows $\frac {1}{\misw{n+1}} < \frac {1}{\misw{n}}$ or
 $\misw{n+1} >\misw{n}$, which is a contradiction with the ordering of $M_i$. 
Hence $p_{n}^{n}(t)<0$ implies $p_{n+1}^{n+1}(t)<0$. 
\end{proof}
We next state Lemmas~\ref{lemma:subhrm-monotonic} and \ref{lemma:mp-expected-gaininequality} regarding  the properties of the probabilities used by {\MPALGONAME}, which are used in the gain analysis. 
Their proofs follow directly from the probabilities used by {\MPALGONAME}. 
\begin{lemma} \label{lemma:subhrm-monotonic}
$\subhrm{n}$ (defined in \dref{def:subhrm}) is strictly decreasing as a function of $n$, for $\nstar \leq n\leq \numbMIS$. 
\end{lemma}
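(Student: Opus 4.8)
The goal is to show $\subhrm{n} > \subhrm{n+1}$ for every $n$ with $\nstar \le n < \numbMIS$. The plan is to first expose the algebraic structure of $\subhrm{n}$ as a function of $n$. Writing $S_n := \sum_{i=1}^{n} \misw{i}^{-1}$, the definition \dref{def:subhrm} reads $\subhrm{n} = (n-1)/S_n$, and since $S_{n+1} = S_n + \misw{n+1}^{-1}$, a one-line manipulation yields the identity
\[
\frac{1}{\subhrm{n+1}} = \frac{n-1}{n}\cdot \frac{1}{\subhrm{n}} + \frac{1}{n}\cdot\frac{1}{\misw{n+1}}.
\]
This is the crux of the argument: it exhibits $1/\subhrm{n+1}$ as a convex combination of $1/\subhrm{n}$ and $1/\misw{n+1}$ with strictly positive weights. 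Here I use that $n \ge \nstar \ge 2$, which holds because $\subhrm{2} < \misw{2}$ whenever the weights are positive (so the reciprocals are well defined); this makes the coefficient $\tfrac{n-1}{n}$ lie strictly between $0$ and $1$.

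Next I would bring in the definition of $\nstar$. For every $n$ in the claimed range we have $n+1 > \nstar$, so by \dref{def:nstar} the quantity $p^{n+1}_{n+1}(t)$ is strictly negative; reading this off \dref{eq:mwisprob} is exactly the statement $\subhrm{n+1} > \misw{n+1}$, i.e. $1/\subhrm{n+1} < 1/\misw{n+1}$. Thus the convex combination displayed above lies strictly below its endpoint $1/\misw{n+1}$.

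The final step is a convexity observation: a strict convex combination of two numbers that lies strictly below one of its two endpoints must lie strictly above the other. Applying this, $1/\subhrm{n+1} < 1/\misw{n+1}$ forces $1/\subhrm{n} < 1/\subhrm{n+1}$, that is $\subhrm{n} > \subhrm{n+1}$, which is the claim.

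I expect the only real obstacle to be the boundary index $n = \nstar$. A naive approach would compare $\subhrm{n}$ and $\subhrm{n+1}$ by bounding $\misw{n+1}$ against $\subhrm{n}$ directly; for $n > \nstar$ this is immediate, since $p^n_n(t)<0$ already gives $\subhrm{n} > \misw{n} \ge \misw{n+1}$, but at $n = \nstar$ one only knows $\subhrm{\nstar} \le \misw{\nstar}$, which says nothing about $\misw{\nstar+1}$. The convex-combination identity is precisely what lets me bypass this case split and treat all $n \ge \nstar$ uniformly, because it phrases the comparison in terms of the index $n+1$ — for which $p^{n+1}_{n+1}(t) < 0$ always holds — rather than $n$.
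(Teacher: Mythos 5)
Your proof is correct and is essentially the paper's argument in a cleaner wrapper: both rest on the observation that $p_m^m(t)<0$ for every $m>\nstar$ gives $\subhrm{m}>\misw{m}$ at the \emph{larger} index of each consecutive pair, combined with the recursion $\sum_{i\in[m]}\misw{i}^{-1}=\sum_{i\in[m-1]}\misw{i}^{-1}+\misw{m}^{-1}$. Your convex-combination identity for $1/\subhrm{n+1}$ is just a tidier packaging of the paper's rearrangement (and your side remark that $\nstar\geq 2$, hence the weights are strictly positive, is valid whenever $\numbMIS\geq 2$; otherwise the lemma is vacuous).
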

 \begin{proof}
 Take any $n$, $\nstar < n\leq \numbMIS$. By the definition of $\nstar$ it must be the case that $p_{n}^n(t)<0$, which implies $\misw{n} < \subhrm{n}$.
 From this, and by using \dref{def:subhrm},
 \be
 &\misw{n}^{-1} (n-1) > \sumd{n} \misw{i}^{-1} \label{eq:helpbnd10}.
 \ee
 We then have
 \begin{flalign*}
 & \sumd{n} \misw{i}^{-1} = \sumd{n-1} \misw{i}^{-1}+\misw{n}^{-1}\\
 &=  \sumd{n-1} \misw{i}^{-1}+ \frac{n-1}{n-2} \misw{n}^{-1}-
  \frac{\misw{n}^{-1}}{n-2}
 \\ 
 &\overset{(a)}{>}\sumd{n-1} \misw{i}^{-1}+ \frac{1}{n-2} \sumd{n} \misw{i}^{-1} - \frac{\misw{n}^{-1}}{n-2} 
 \nonumber \\
 &= \sumd{n-1} \misw{i}^{-1}+  \frac{1}{n-2} \sumd{n-1}  \misw{i}^{-1} \\
 &= \frac {n-1}{n-2}\sumd{n-1}  \misw{i}^{-1},
 \end{flalign*}
 where in $(a)$ we used (\ref{eq:helpbnd10}). Dividing both sides by $n-1$, we get $\subhrm{n}^{-1} > \subhrm{n-1}^{-1}$. 
\end{proof} 
\begin{lemma} \label{lemma:mp-expected-gaininequality}
If $i\not\in [\nstar]$ and $j\in [\nstar]$, for the choice of probabilities $p_{k}^{\nstar}(t)$ in (\ref{eq:mwisprob}) selected by $\MPALGONAME$, we have
\be
&\misw{i}+ \sumdk{\nstar} p^{\nstar}_k(t) \misw{k} < \nonumber \\ 
& \misw{j}+\sum_{k \in [\nstar]\setminus\{j\}} p^{\nstar}_k(t) \misw{k} \nonumber
\ee
\end{lemma}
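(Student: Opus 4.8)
The plan is to reduce the claimed inequality to a single clean comparison between a weight and a subharmonic average, and then invoke the monotonicity of $\subhrm{\cdot}$ together with the defining property of $\nstar$. First I would observe that the sum $\sum_{k\in[\nstar]\setminus\{j\}}p^{\nstar}_k(t)\misw{k}$ appears identically on both sides, so after cancelling it the inequality is equivalent to $\misw{i}+p^{\nstar}_j(t)\misw{j}<\misw{j}$, i.e. $\misw{i}<\misw{j}\lp1-p^{\nstar}_j(t)\rp$. Since $j\in[\nstar]$, formula (\ref{eq:mwisprob}) gives $p^{\nstar}_j(t)=1-\subhrm{\nstar}/\misw{j}$, hence $\misw{j}\lp1-p^{\nstar}_j(t)\rp=\subhrm{\nstar}$, which is notably independent of $j$. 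Thus the whole lemma collapses to the single inequality $\misw{i}<\subhrm{\nstar}$ for every $i\not\in[\nstar]$.

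To establish $\misw{i}<\subhrm{\nstar}$, I would first note that $i\not\in[\nstar]$ means $\nstar<i\leq\numbMIS$; in particular $\nstar<\numbMIS$, since the case $\nstar=\numbMIS$ admits no such $i$ and makes the statement vacuous. By the ordering $\misw{1}\geq\cdots\geq\misw{\numbMIS}$ we get $\misw{i}\leq\misw{\nstar+1}$. Next, by the characterization of $\nstar$ in (\ref{def:nstar}) as the largest $n$ with $p^{n}_{n}(t)\geq0$, the index $\nstar+1$ fails the test, i.e. $p^{\nstar+1}_{\nstar+1}(t)<0$, which by (\ref{eq:mwisprob}) and (\ref{def:subhrm}) is exactly $\misw{\nstar+1}<\subhrm{\nstar+1}$. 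Finally, Lemma~\ref{lemma:subhrm-monotonic} asserts that $\subhrm{n}$ is strictly decreasing for $\nstar\leq n\leq\numbMIS$, so $\subhrm{\nstar+1}<\subhrm{\nstar}$. Chaining these gives $\misw{i}\leq\misw{\nstar+1}<\subhrm{\nstar+1}<\subhrm{\nstar}$, the desired strict inequality.

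I do not anticipate a serious obstacle; once the cancellation is spotted the content is essentially bookkeeping. The one point that needs care is the domain of validity of the auxiliary facts: I must ensure $\nstar+1\leq\numbMIS$ so that $\subhrm{\nstar+1}$ is defined and Lemma~\ref{lemma:subhrm-monotonic} applies at $n=\nstar+1$, and also that $\nstar\geq2$ so that the monotonicity step is non-degenerate at that index. The former holds because $\nstar=\numbMIS$ is the vacuous case noted above; the latter follows from the elementary fact that $p^{2}_{2}(t)=\misw{2}/\lp\misw{1}+\misw{2}\rp\geq0$ always, so $\nstar\geq2$ whenever $\numbMIS\geq2$. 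With these edge cases dispatched, the chain of inequalities completes the proof.
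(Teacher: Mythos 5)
Your proof is correct and follows essentially the same route as the paper's: cancel the common sum to reduce the lemma to $\misw{i}<\subhrm{\nstar}$, then combine the defining property of $\nstar$ with the monotonicity of $\subhrm{n}$ from Lemma~\ref{lemma:subhrm-monotonic}. The only (cosmetic) difference is that the paper chains $\misw{i}<\subhrm{i}<\subhrm{\nstar}$ directly, whereas you pass through $\misw{i}\leq\misw{\nstar+1}<\subhrm{\nstar+1}<\subhrm{\nstar}$; your extra checks that $\nstar\geq 2$ and that the case $\nstar=\numbMIS$ is vacuous are welcome details the paper leaves implicit.
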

 \begin{proof}
 Equivalently after simplifying the inequality, we need to prove:
 \[\misw{i} < \misw{j} (1-p_j^{\nstar}(t)) = \subhrm{\nstar}.\]
 Since $i\not\in[\nstar]$, we have $\misw{i}< \subhrm{i} $,
 and from the monotonicity of $\subhrm{n}$ for $n\geq \nstar$ (Lemma~\ref{lemma:subhrm-monotonic}), since $i>\nstar$, we have $\subhrm{i} < \subhrm{\nstar}$. Therefore, $\misw{i} < \subhrm{\nstar}$.
 \end{proof}
\begin{lemma}\label{lemma:boundsadvalg-mp}
For each time $t\in \frm$, the gain obtained by \MPALGONAME, and the amortized gain obtained by the Max-Gain policy $\ADV$, starting from some state $\statet$, satisfy:
\be
\ER[\again_{\ADV}(t)|\statet] \leq  \sumd{\nstar} \misw{i} - (\nstar-1) \subhrm{\nstar} + \extracompmul\\
 \ER[\gain_{\MPALGONAME}(t)|\statet] = \sumd{\nstar} \misw{i} - \nstar \subhrm{\nstar}
\ee
where $\extracompmul=KF$and $\ER$ is with respect to decisions of \MPALGONAME.
\end{lemma}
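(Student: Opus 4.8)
The plan is to prove the exact identity for \MPALGONAME\ by direct substitution, and to obtain the upper bound on the adversary through a per-slot coupling over maximal independent sets combined with Lemma~\ref{lemma:mp-expected-gaininequality}. The algorithm side is immediate: when \MPALGONAME\ draws $M_i$ it collects exactly $\misw{i}$, so $\ER[\gain_{\MPALGONAME}(t)|\statet]=\sumd{\nstar}p_i^{\nstar}(t)\misw{i}$; substituting $p_i^{\nstar}(t)=1-\subhrm{\nstar}/\misw{i}$ from \dref{eq:mwisprob} turns each summand into $\misw{i}-\subhrm{\nstar}$ and yields $\sumd{\nstar}\misw{i}-\nstar\subhrm{\nstar}$, the claimed equality.

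For the adversary I would reuse the exchange/coupling idea of Lemma~\ref{lemma:boundsadvalg}, now comparing maximal schedules instead of single links. Fix the MIS $M_j$ transmitted by $\ADV$ and condition on the MIS $M_k$ drawn by \MPALGONAME. If $k=j$ both policies transmit the same earliest-deadline packets, the states remain identical, and the amortized gain is the true gain $\misw{j}$. If $k\neq j$ I realign $\ADV$'s buffers and deficits to \MPALGONAME's post-transmission state: on $M_k\setminus M_j$ I let $\ADV$ also transmit the earliest-deadline packets (crediting their current total weight, which is at most $\misw{k}$), and on $M_j\setminus M_k$ I restore the packets and deficits $\ADV$ just spent, which only helps it and needs no charge. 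Since only earliest-deadline packets of the same links are inserted or removed, no deadline-mismatched injection arises and the collocated $a_{\max}\maxdead$ term disappears; the sole charge is for future service. Realignment lowers at most $K$ deficits by one, and a unit drop in a deficit costs at most one unit on each of the $\le F$ remaining slots, so a single additive charge $\extracompmul=KF$ dominates all future losses. Thus the per-slot amortized gain is at most $\misw{j}+\misw{k}\,\mathds{1}[k\neq j]+KF$.

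Taking the expectation over $k$ and using $\misw{j}(1-p_j^{\nstar}(t))=\subhrm{\nstar}$, for $j\in[\nstar]$ I get $\ER[\again_{\ADV}(t)|\statet]\le\subhrm{\nstar}+\sumdk{\nstar}p_k^{\nstar}(t)\misw{k}+KF$; the middle sum is exactly the algorithm gain $\sumd{\nstar}\misw{i}-\nstar\subhrm{\nstar}$, so the right-hand side collapses to $\sumd{\nstar}\misw{i}-(\nstar-1)\subhrm{\nstar}+\extracompmul$, the asserted bound. When $j\notin[\nstar]$ every draw has $k\neq j$, so the expectation equals $\misw{j}+\sumdk{\nstar}p_k^{\nstar}(t)\misw{k}+KF$, and Lemma~\ref{lemma:mp-expected-gaininequality} states exactly that this is strictly below the $j\in[\nstar]$ value. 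Hence the bound holds for whichever schedule $\ADV$ selects, in particular the max-gain one.

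The delicate step is the $k\neq j$ accounting: I must confirm that, after the state is realigned to \MPALGONAME's, crediting the fictitious $M_k$-service and restoring $M_j\setminus M_k$ truly upper-bounds $\ADV$'s continuation, i.e.\ that one flat charge $\extracompmul=KF$ -- not a per-link or per-packet charge -- absorbs every future deficit loss. With that per-slot inequality in hand, the expectation algebra and the invocation of Lemma~\ref{lemma:mp-expected-gaininequality} are routine.
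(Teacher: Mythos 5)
Your proposal is correct and follows essentially the same route as the paper: the algorithm's gain by direct substitution of $p_i^{\nstar}(t)=1-\subhrm{\nstar}/\misw{i}$, and the adversary's amortized gain via the same per-slot coupling (credit at most $\misw{k}$ for the extra transmissions on the symmetric difference, a flat $KF$ charge for the induced deficit decreases) with the identity $\misw{j}(1-p_j^{\nstar}(t))=\subhrm{\nstar}$ and Lemma~\ref{lemma:mp-expected-gaininequality} handling the case $j\notin[\nstar]$. The only difference is presentational: you make explicit the harmless restoration of packets and deficits on the links served by $\ADV$ but not by \MPALGONAME, which the paper leaves implicit.
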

\begin{proof}
Using the probabilities computed by \MPALGONAME, the expected gain of {\MPALGONAME} at time $t$ is 
\[
\E[\gain_{\MPALGONAME}(t)] = \sumd{\nstar} p_i^{\nstar}(t) \misw{i} = 
\sumd {\nstar} \misw{i} - \nstar \subhrm{\nstar}
\]
Next for the amortized gain of the Max-Gain Policy $\ADV$, we will apply the same technique as in the collocated networks case, where we modify the buffers and give $\ADV$  additional reward.
Suppose $\mu$ transmits $M_i$, and {\MPALGONAME} transmits some $M_j$. 
    We make the buffers the same by allowing $\ADV$ to additionally transmit all the packets that are transmitted by {\MPALGONAME} but not by $\ADV$ (i.e., in links $M_j\setminus M_i$). Since this will result in a decrease of the deficit by one for each link in $M_j\setminus M_i$ for $\ADV$ in the remaining slots, we give $\ADV$ an additional reward $\extracompmul = KF$ which is an upper bound on the number of packets transmitted by $\ADV$ from links $M_j\setminus M_i$ in the remaining slots. To compute the expected gain, we differentiate between two cases: 
    
    \textbf{Case 1.} $i\in [\nstar]$. In this case, we can write
    \begin{flalign}
    &\E[{\again_{\ADVR}}^{M_i}(t)] {=} \misw{i}      + \sum_{j \in [\nstar]\setminus\{i\}} p_j^{\nstar}(t) \left( W_{M_j\setminus M_i}(t)+\extracompmul \right)  \nonumber 
    \\
    &{\leq} \misw{i} + \sum_{j \in [\nstar]\setminus\{i\}} p_j^{\nstar}(t) \left(\misw{j}+\extracompmul \right)  \label{eq:mp-usfl-ineq1}\\
    &=
    \misw{i}(1-p_i^{\nstar}(t)) + \sum_{j \in [\nstar]} p_j^{\nstar}(t) \misw{j} + \sum_{j \in [\nstar]\setminus\{i\}} p_j^{\nstar}(t)\extracompmul \nonumber
    \\
    &= \subhrm{\nstar} + \sumd{\nstar} \misw{i} - \nstar \subhrm{\nstar} + \sum_{j \in [\nstar]\setminus\{i\}} p_j^{\nstar}(t)\extracompmul \label{eq:intermediate-step-used}
    \\
    &\leq \subhrm{\nstar} + \sumd{\nstar} \misw{i} - \nstar \subhrm{\nstar} + \extracompmul.
    \label{eq:mp-usfl-ineq2}
    \end{flalign}
    
    \textbf{Case 2.} $i\not \in [\nstar]$. In this case, we have
    \be
    \E[{\again_{\ADVR}}^{M_i}(t)] \leq &\misw{i} + \sumdk{\nstar} p_k^{\nstar}(t) (\misw{k}+\extracompmul) \nonumber \\
    \stackrel{a}{<}&\misw{j}+\sum_{k \in [\nstar]\setminus\{j\}} p_k^{\nstar}(t) \misw{k} + \extracompmul \nonumber \\
    = &
     \subhrm{\nstar} + \sumd{\nstar} \misw{i} - \nstar \subhrm{\nstar} + \extracompmul, \nonumber
    \ee
    where in $(a)$ we applied Lemma \ref{lemma:mp-expected-gaininequality} for $i,j$.
    Note that in both cases, the upper bound is the same and does not depend on the particular choice of $M_i$.
\end{proof}
\begin{lemma}\label{lemma:bound-general-case}
For $\subhrm{\nstar}$ in \dref{def:subhrm}, We have
\ben
&\frac{\sumd{\nstar} \misw{i} - \nstar \subhrm{\nstar}}
{ \sumd{\nstar} \misw{i} -(\nstar-1)\subhrm{\nstar} } \geq 
\frac {\numbMIS}{2\numbMIS-1}.
\een
\end{lemma}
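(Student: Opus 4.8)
The plan is to reduce the stated ratio to a clean one-variable inequality and then invoke a Cauchy--Schwarz bound together with the constraint $\nstar \leq \numbMIS$. First I would introduce the shorthand $S = \sumd{\nstar}\misw{i}$ and $H = \sumd{\nstar}\misw{i}^{-1}$, so that by \dref{def:subhrm} we have $\subhrm{\nstar} = (\nstar-1)/H$. Writing $c := \subhrm{\nstar}$ and $A := S - \nstar c$, the numerator of the target is exactly $A$ and the denominator is $S - (\nstar-1)c = A + c$. Since (as I will verify) $A > 0$ and $2\numbMIS - 1 > 0$, the claim $\frac{A}{A+c} \geq \frac{\numbMIS}{2\numbMIS-1}$ is equivalent, after cross-multiplying, to the single inequality $(\numbMIS-1)A \geq \numbMIS\, c$, i.e.\ $A/c \geq \numbMIS/(\numbMIS-1)$.

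The heart of the argument is to lower-bound $A/c$. Here I would apply the Cauchy--Schwarz (equivalently AM--HM) inequality $S \cdot H \geq \nstar^2$ to the positive weights $\misw{1},\dots,\misw{\nstar}$, which yields $S \geq \nstar^2/H$. Substituting $c = (\nstar-1)/H$ gives
\[ A = S - \nstar c \geq \frac{\nstar^2}{H} - \frac{\nstar(\nstar-1)}{H} = \frac{\nstar}{H} > 0, \]
so that $A/c \geq \frac{\nstar/H}{(\nstar-1)/H} = \frac{\nstar}{\nstar-1}$. Feeding this back into $\frac{A}{A+c} = \bigl(1 + c/A\bigr)^{-1}$, which is increasing in $A/c$, produces the intermediate bound $\frac{A}{A+c} \geq \frac{\nstar}{2\nstar-1}$.

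Finally, I would invoke the fact that $\nstar \leq \numbMIS$ (the number of maximal independent sets assigned positive probability cannot exceed the total number of them) together with the monotonicity of $x \mapsto x/(2x-1)$, whose derivative $-1/(2x-1)^2$ is negative for $x \geq 1$. Hence $\frac{\nstar}{2\nstar-1} \geq \frac{\numbMIS}{2\numbMIS-1}$, completing the chain and establishing the lemma.

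I do not expect a genuine obstacle, since the calculations are mechanical; the one place demanding care is the degenerate case $\nstar = 1$, where $c = \subhrm{1} = 0$ and the ratio $A/c$ is undefined. There both numerator and denominator reduce to $\misw{1}$, so the ratio equals $1 \geq \numbMIS/(2\numbMIS-1)$ trivially, and I would dispatch it separately, running the Cauchy--Schwarz argument only for $\nstar \geq 2$. The sole conceptual step is recognizing that the correct tool is $S \cdot H \geq \nstar^2$ and that the bound must be loosened from the natural $\nstar/(2\nstar-1)$ to $\numbMIS/(2\numbMIS-1)$ precisely via $\nstar \leq \numbMIS$.
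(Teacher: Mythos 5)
Your proof is correct and follows essentially the same route as the paper's: both reduce the claim to the bound $\frac{\nstar}{2\nstar-1}$ via the monotonicity of $x\mapsto x/(2x-1)$ and $\nstar\le\numbMIS$, and both establish that bound by rewriting it as $(\nstar-1)\sumd{\nstar}\misw{i}\ \ge\ \nstar^2\, C_{\nstar}(t)$ and invoking the AM--HM (Cauchy--Schwarz) inequality $\bigl(\sumd{\nstar}\misw{i}\bigr)\bigl(\sumd{\nstar}\misw{i}^{-1}\bigr)\ge \nstar^2$. Your explicit treatment of the degenerate case $\nstar=1$ matches the paper's dismissal of the ``trivial case,'' so there is nothing substantive to add.
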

\begin{proof}
Suffices to show that
 \be \label{inequality-multi}
 \frac{\sumd{\nstar} \misw{i} - \nstar \subhrm{\nstar}}
 {\sumd{\nstar} \misw{i} -(\nstar-1)\subhrm{\nstar} }
 \geq 
 \frac {\nstar}{2\nstar-1}
 \ee
Since by definition $\numbMIS \geq \nstar$. For the non-trivial case, we have $\nstar-1> 0$, and therefore inequality (\ref{inequality-multi}) can equivalently be written as
 $
 (\nstar-1)\sumd{\nstar} \misw{i} \geq 
 \nstar^2(t) \subhrm{\nstar}.  
 $
This inequality holds since it follows by applying the inequality between arithmetic and harmonic means:
\ben
\frac{1}{\nstar} {\sumd{ \nstar} \misw{i} }  \geq 
\frac{\nstar}{\sumd{\nstar} \misw{i}^{-1}},  
\een
and the fact that $\nstar-1 \geq 1$.
\end{proof}

\begin{theorem} \label{mainthm-mp}
Under \MPALGONAME, given any $\epsilon>0$ there is W' such that for all $\|w_0\|=\suml w_L(t_0) \geq W'$,
\[
 \ERJ_{t_0} \Big[ \sumt \mathcal G_{\MPALGONAME}(t) \Big] 
\geq (\rho-\epsilon)\EJ_{t_0} \Big[ \sumt \mathcal G_{\ADV}(t)\Big] ,  
\]
where $\mu$ is any non-causal policy, and $\rho= \frac {|\allMIS|}{2|\allMIS|-1}$.
\end{theorem}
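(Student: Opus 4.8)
The plan is to mirror the structure of the collocated analysis (Lemma~\ref{lemmas0} and Theorem~\ref{mainthm}), but the argument is more direct here: since Lemma~\ref{lemma:boundsadvalg-mp} already compares \MPALGONAME\ against the optimal Max-Gain policy, there is no need for a preliminary reduction to a restricted policy class (the analogue of Lemma~\ref{lemma:baselemma}). First I would note that it suffices to establish the bound for the optimal non-causal Max-Gain policy $\ADV$: because this policy maximizes $\sumt\gain_{\ADV}(t)$ over a frame for every traffic pattern, proving the inequality against it immediately yields the same inequality against every non-causal policy.

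The core per-slot estimate is obtained by chaining Lemma~\ref{lemma:boundsadvalg-mp} with Lemma~\ref{lemma:bound-general-case}. Writing $B(t)=\sumd{\nstar}\misw{i}-(\nstar-1)\subhrm{\nstar}$ for the bounding expression on the amortized gain, Lemma~\ref{lemma:bound-general-case} gives $\ER[\gain_{\MPALGONAME}(t)\mid\statet]=\sumd{\nstar}\misw{i}-\nstar\subhrm{\nstar}\geq \rho\,B(t)$, while Lemma~\ref{lemma:boundsadvalg-mp} gives $\ER[\again_{\ADV}(t)\mid\statet]\leq B(t)+\extracompmul$. Combining these,
\[
\ER[\gain_{\MPALGONAME}(t)\mid\statet]\geq \rho\left(\ER[\again_{\ADV}(t)\mid\statet]-\extracompmul\right).
\]
Next I would lift this to the whole frame exactly as in the proof of Lemma~\ref{lemmas0}: conditioning on $\statez$ and the frame length $F$, and using the tower property together with the observation that the amortized gain at time $t$ is a function only of the current state $\statet$ and the future traffic in the frame (not of the past), I get $\ERJ[\gain_{\MPALGONAME}(t)\mid\statez,F]\geq \rho(\ERJ[\again_{\ADV}(t)\mid\statez,F]-\extracompmul)$. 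Summing over the $F$ slots, taking expectation over $F$, and using that the coupling in Lemma~\ref{lemma:boundsadvalg-mp} is advantageous at every step so that $\EJ[\sumt\gain_{\ADV}(t)\mid\statez]\leq\ERJ[\sumt\again_{\ADV}(t)\mid\statez]$, I obtain
\[
\ERJ\Big[\sumt\gain_{\MPALGONAME}(t)\,\Big|\,\statez\Big]\geq \rho\,\EJ\Big[\sumt\gain_{\ADV}(t)\,\Big|\,\statez\Big]-\rho K\,\E[F^2],
\]
where the additive constant is finite since all moments of $F$ are finite.

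It remains to absorb this additive constant, which is where the growth argument enters. Any nonempty link extends to a maximal independent set, so $\misw{1}\geq w_{\max}(t)$; since \MPALGONAME\ assigns probability at least $1/\numbMIS$ to $M_1$, its per-slot expected gain is at least $w_{\max}(t)/\numbMIS$, and the Max-Gain policy dominates \MPALGONAME. Repeating the estimate of~\dref{eq:arg-wmax-bound} (bounding $w_{\max}(t)\geq w_{l_1}(t)a_{l_1}(t)/\maxpacks$ for the initially-heaviest link $l_1$ and $w_{l_1}(t)\geq \|w(t_0)\|/K-F$) shows $\EJ[\sumt\gain_{\ADV}(t)\mid\statez]\to\infty$ as $\|w(t_0)\|\to\infty$. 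Hence the fixed term $\rho K\E[F^2]$ becomes negligible; dividing and letting $\|w(t_0)\|\to\infty$ yields a ratio of at least $\rho-\epsilon$ for all $\|w(t_0)\|\geq W'$, and the optimality of $\ADV$ extends the bound to every non-causal policy.

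The main obstacle is the amortization bookkeeping in the second paragraph. One must verify that the per-slot amortized upper bounds, each padded by the compensation $\extracompmul=KF$, genuinely sum to dominate the true total gain $\sumt\gain_{\ADV}(t)$, and that the tower-property conditioning (the property that the amortized gain depends only on the current state and future traffic) remains valid after the coupling modifies $\ADV$'s buffers and deficits at every slot. The remaining steps are routine once this comparison is in place.
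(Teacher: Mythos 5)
Your proposal is correct and follows essentially the same route as the paper: it combines Lemma~\ref{lemma:boundsadvalg-mp} with Lemma~\ref{lemma:bound-general-case}, lifts the per-slot bounds to the whole frame via the tower-property argument of Lemma~\ref{lemmas0}, and absorbs the aggregated compensation $K\E[F^2]$ by showing the optimal gain diverges as $\|w(t_0)\|\to\infty$ using $\misw{1}\geq w_{max}(t)$ together with the estimate of \dref{eq:arg-wmax-bound}. The only (immaterial) difference is that you apply the ratio bound of Lemma~\ref{lemma:bound-general-case} slot-by-slot before summing, whereas the paper sums first and applies it to the aggregated quantities $\E[\sumt x(t)]$ and $\E[\sumt y(t)]$.
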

\begin{proof}
By using Lemma~\ref{lemma:boundsadvalg-mp}, summing and taking expectation similar to the proof of Lemma~\ref{lemmas0}, it follows that
\ben
&\E \Big[ \sumt \gain_{\MPALGONAME}(t)|\statez \Big] = \E \Big[ \sumt x(t) | \statez \Big] \label{eq:general-gain1}\\
&\E \Big[ \sumt \gain_{\ADVR}(t)|\statez \Big] \leq \bar{\mathcal E}_m + \E \Big[\sumt  y(t)| \statez \Big]  \label{eq:general-gain2}
\een
where $\bar{\mathcal E}_m=K\E[F^2]$, and $x(t)=y(t)-\subhrm{\nstar}$, where
\ben
&y(t) := \summis \misw{i} - (\nstar(t)-1)  \subhrm{\nstar}.
\een
 Now notice that 
 \NEW{
 \be
 y(t) &=& \subhrm{\nstar} + \sumd{\nstar} \misw{i} - \nstar \subhrm{\nstar} 
 \nonumber \\
& =& \misw{1}(1-p_1^{\nstar}(t)) + \sum_{i \in [\nstar]} p_i^{\nstar}(t) \misw{i}  \nonumber\\
&=& \misw{1} + \sum_{i \in [\nstar]\setminus\{1\}} p_i^{\nstar}(t) \misw{i}  \nonumber \\
 &\geq&
 \misw{1} \geq w_{max}(t) \label{eq:arg-general-wmax}
 \ee
 }
Now notice
 \be
 &&\lim_{\|w_0\|\rightarrow \infty} \frac{ 
 \E [ \sumt  x(t) | \statez ]}
 {\E [ \sumt y(t) | \statez ] +\bar{\mathcal E}_m } \nonumber \\
 && \overset{(a)}=
 \lim_{\|w_0\|\rightarrow \infty}
  \frac{ 
 \E [ \sumt  x(t) | \statez ]}
 {\E [ \sumt y(t) | \statez ]} 
 \overset{(b)}{\geq} 
 \frac {|\allMIS|}{2|\allMIS|-1} \nonumber,
 \ee
 where in $(a)$ we used the fact that $\bar{\mathcal E}_m<\infty$, and that the remaining expression in the denominator goes to infinity using the inequality derived in (\ref{eq:arg-general-wmax}) alongside the argument in (\ref{eq:arg-wmax-bound}).
 In $(b)$ we used Lemma~\ref{lemma:bound-general-case}.
\end{proof}

\section{Simulation Results}\label{sec:sim}
\NEW{
If the packet arrival rate becomes very large, any policy inevitably will be restricted to a small delivery ratio $\mathbf p$. But then due to high availability of packets in the buffers, the policy can always schedule packets, thus leading to a small deficit queue under such small $\mathbf p$, even for simple and naive policies. Hence, the problem is interesting and challenging when the packet arrival rate is not too high so that the optimal policy can fundamentally achieve a high $\mathbf p$. Similarly, if the packet deadlines become very large, the problem is reduced to the regular non real-time scheduling and deadline-oblivious algorithms like LDF should perform reasonably well. Hence, we focus on the interesting scenario when packet arrival rates or deadlines are not excessively large.}

In our simulations, 
we consider two cases for the deficit admission (see the model section): one is based on coin tossing where each arrival on a link $l$ is counted as deficit with probability $p_l$, and the other is deterministic, where each arrival increases the deficit by exactly $p_l$.

We compare the performance of our randomized algorithms, {\ALGONAME} and {\MPALGONAME} with LDF. Recall that LDF chooses the longest-deficit link, then removes the interfering links with this link, and repeat the procedure. We further consider two versions of LDF: One is LDF that does a random tie breaking when presented with a deficit tie (LDF-RD), and the other version tries to schedule the non-dominated link and its earliest-deadline packet (LDF-ED) in such tie situations. In the plots, we compare the average deficit (over all links) as we vary the value of the delivery ratio. 
 \begin{figure}
 \centering
 \begin{subfigure}[b]{.20\textwidth}
   \centering
   \includegraphics[width=0.95\linewidth]{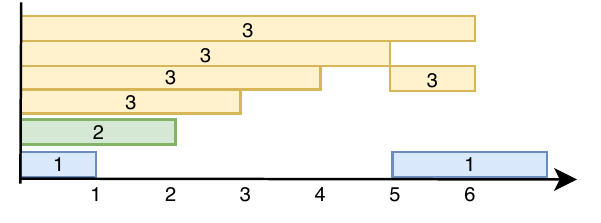} \caption{Traffic pattern F}
     \label{fig:depict-trafficF}
 \end{subfigure}%
 \begin{subfigure}[b]{.30\textwidth}
   \centering
   \includegraphics[width=0.95\linewidth]{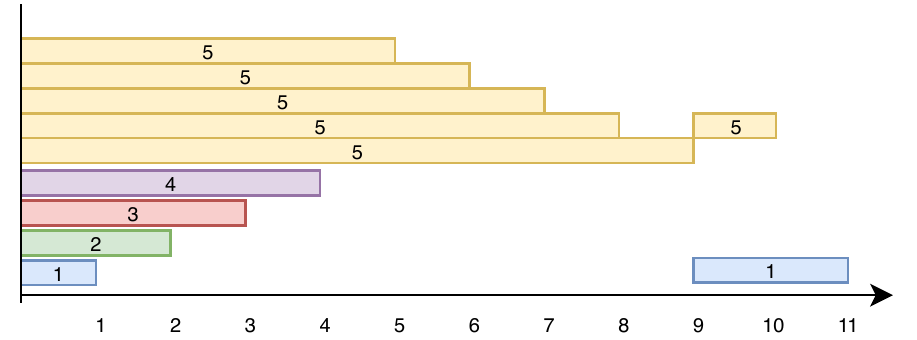}
   \caption{Traffic pattern A}
     \label{fig:depict-trafficA}
 \end{subfigure}
 \caption{Traffic patterns used in simulations}
 \label{fig:depictions-of-traffics}
 \end{figure}

\begin{figure}[t]
\centering
\begin{subfigure}{.25\textwidth}
  \centering
  \includegraphics[width=0.95\linewidth]{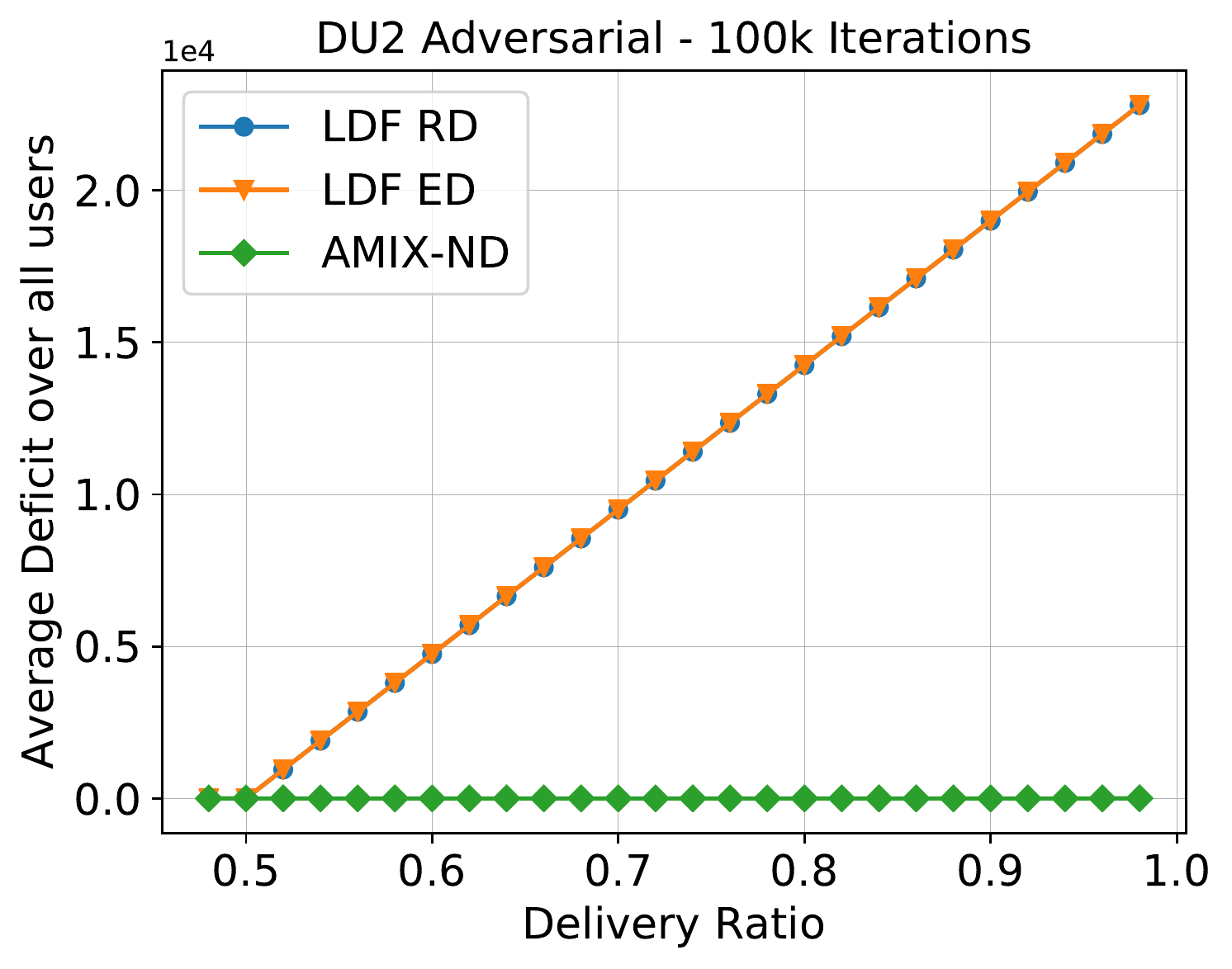} \caption{Deterministic deficit admission}
  \label{fig:col-biggap1}
\end{subfigure}%
\begin{subfigure}{.25\textwidth}
  \centering
  \includegraphics[width=0.95\linewidth]{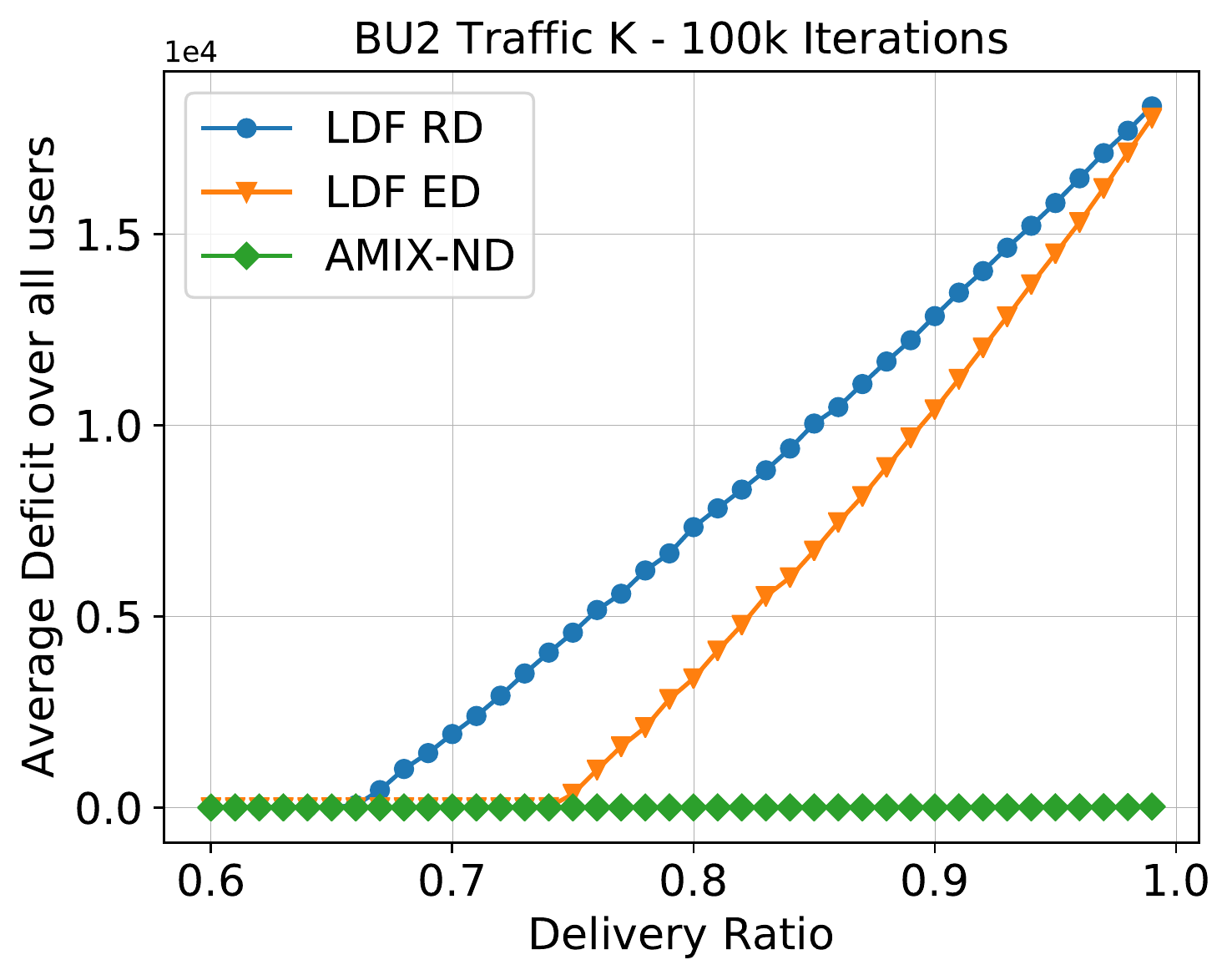}
  \caption{Coin-tossing deficit admission}
  \label{fig:col-biggap2}
\end{subfigure}
\caption{Comparison between {\ALGONAME} and LDF policies in a two-link network.}
\label{fig:generalgraph-comparison}
\end{figure}

 \begin{figure}[t]
 \centering
 \begin{subfigure}{.25\textwidth}
   \centering
   \includegraphics[width=0.95\linewidth]{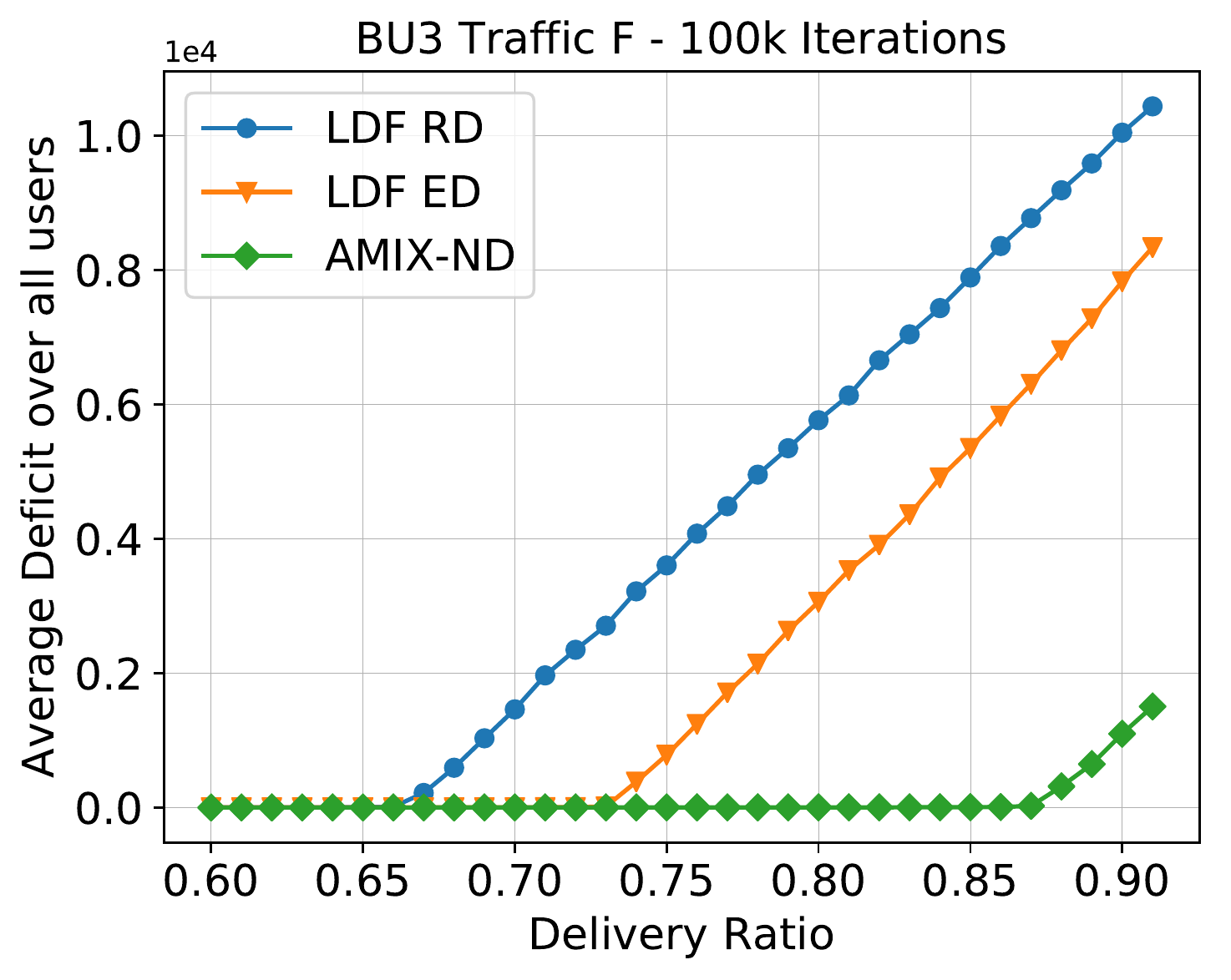} \caption{A collocated network with 3 users.}
   \label{fig:col-trafficF}
 \end{subfigure}%
 \begin{subfigure}{.25\textwidth}
   \centering
   \includegraphics[width=0.95\linewidth]{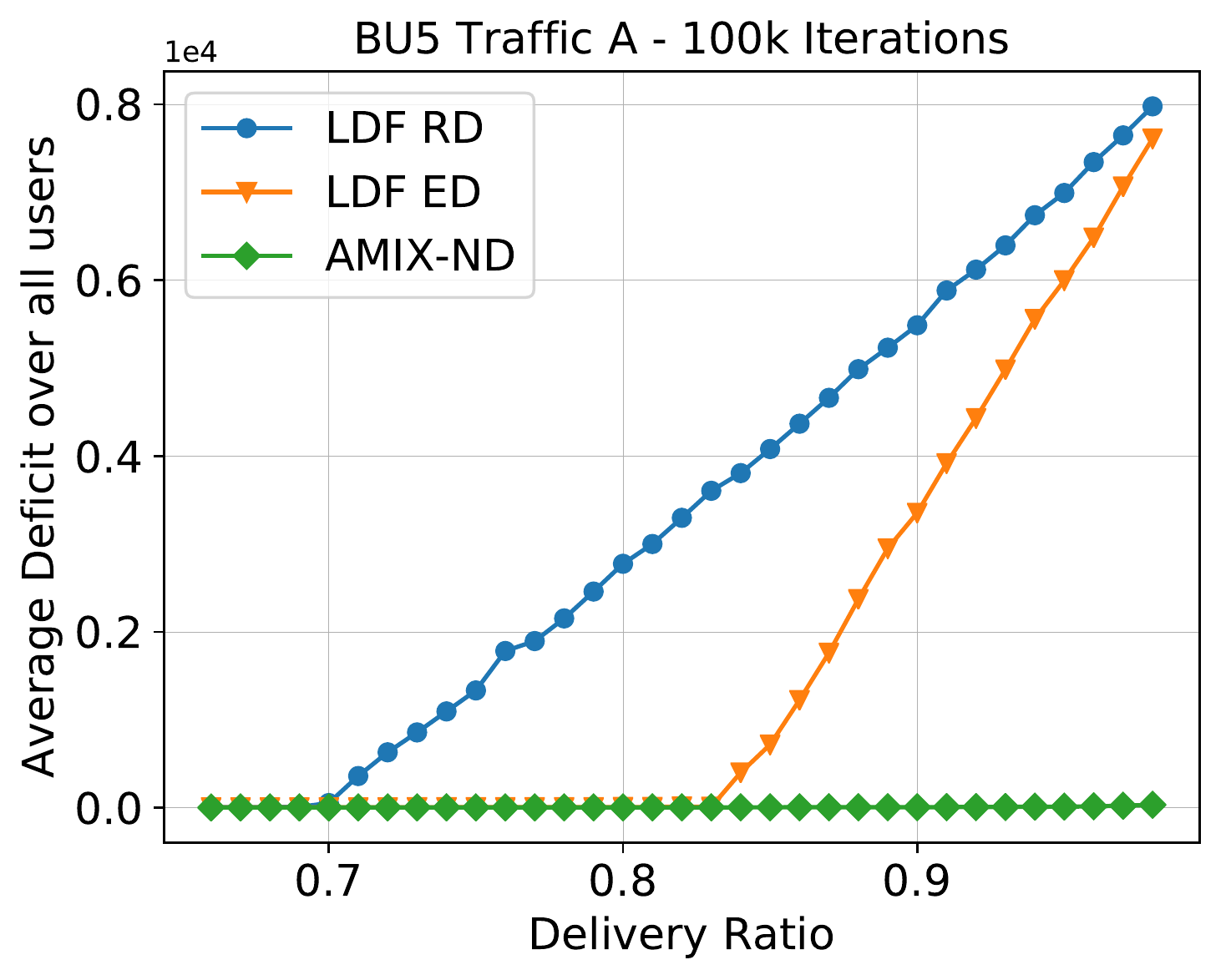}
   \caption{A collocated network with 5 users.}
   \label{fig:col-trafficA}
 \end{subfigure}
 \caption{Comparison between {\ALGONAME} and LDF policies in collocated networks under coin-tossing deficit admission.}
 \label{fig:col-more}
 \end{figure}

\begin{figure}[t]
\centering
\begin{subfigure}{.25\textwidth}
  \centering
  \includegraphics[width=0.95\linewidth]{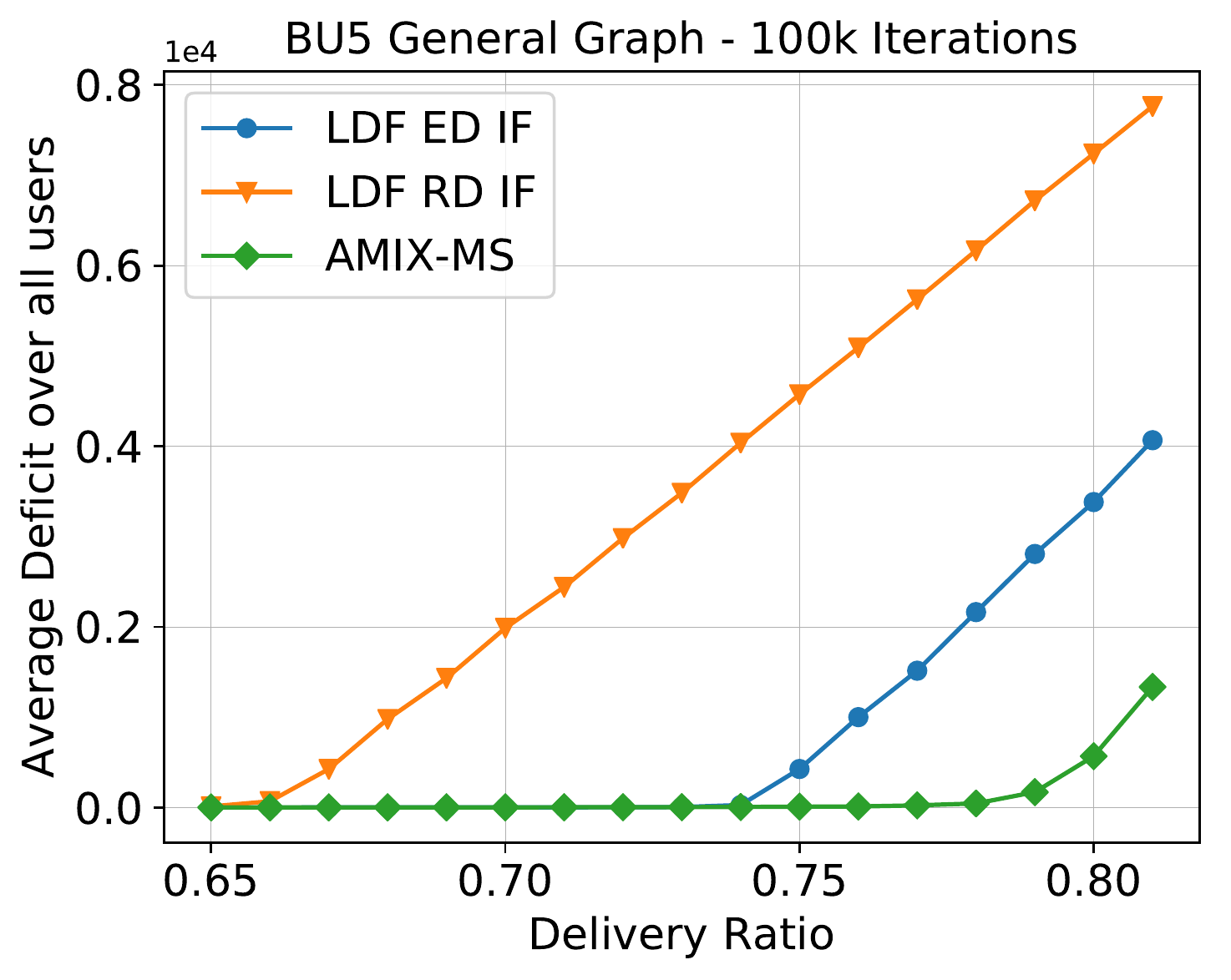}
  \caption{Coin-tossing deficit admission}
  \label{fig:simple-graph-sub1}
\end{subfigure}%
\begin{subfigure}{.25\textwidth}
  \centering
  \includegraphics[width=0.95\linewidth]{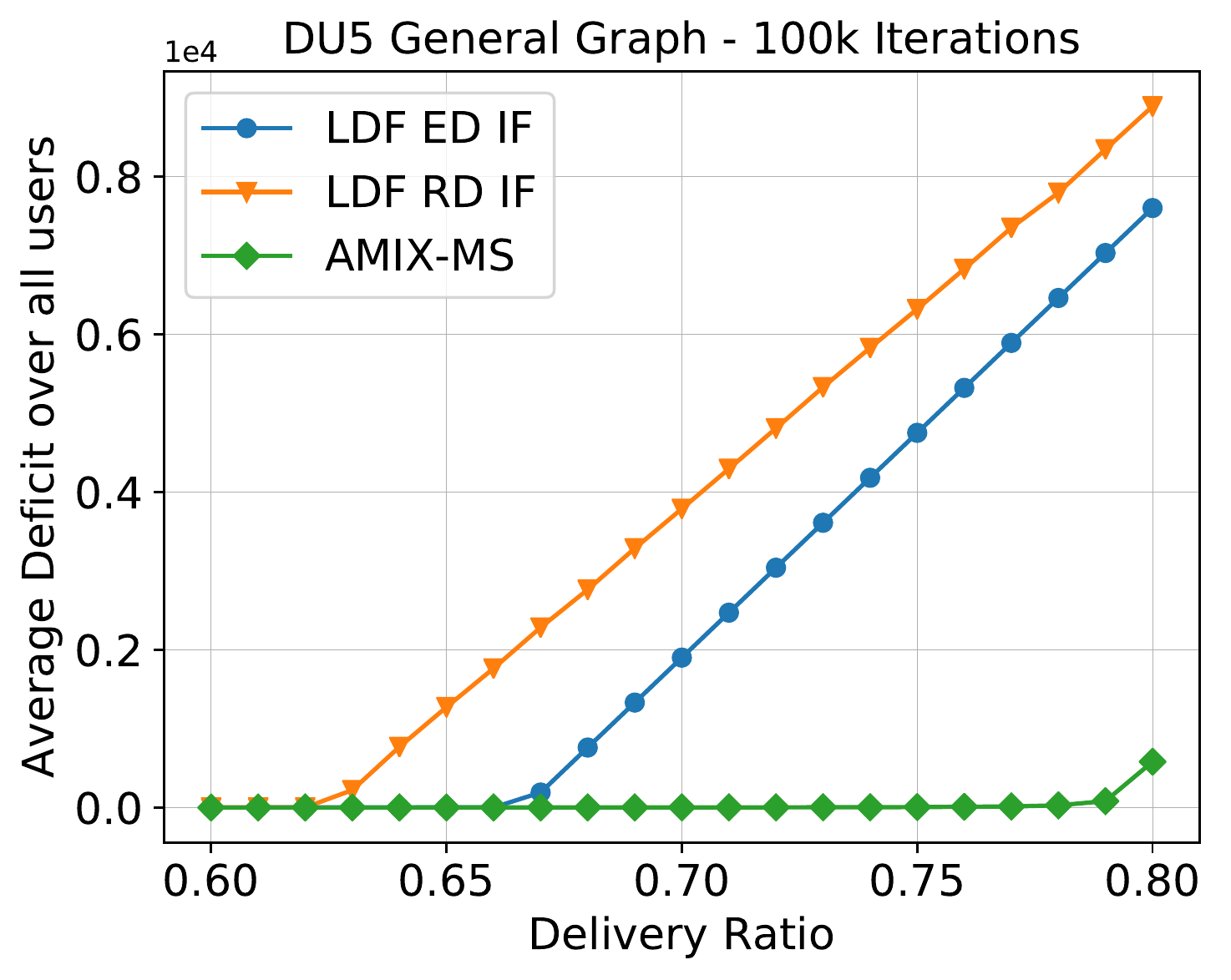}
  \caption{Deterministic deficit admission}
  \label{fig:simple-graph-sub2}
\end{subfigure}
\caption{Comparison between {\MPALGONAME} and LDF policies in a lightly connected interference graph with 5 links.}
\label{fig:simple-graph}
\end{figure}
\textit{Collocated Networks.} We first consider two interfering links with deterministic deficit admission. The traffic is periodic and consists of alternating Pattern A and Pattern B of Figure~\ref{fig:tightexample}, with the delivery ratios satisfying $p_2=p_1+0.001$. Figure~\ref{fig:col-biggap1} shows the result. As we can see, \ALGONAME\ is able to achieve roughly $p_1 = 0.996$, whereas both versions of LDF become unstable for $p_1=0.5+\epsilon$.
In Figure~\ref{fig:col-biggap2}, again for two users, we used a traffic that consists of Pattern C followed by Pattern B, repeatedly. This time we keep $p_1=p_2$. \ALGONAME\ achieved near $p_1=1.0$, whereas the better version of LDF achieved roughly $0.75$, resulting in a gap of around $0.25$. 

\NEW{Figure~\ref{fig:col-trafficF} and Figure~\ref{fig:col-trafficA} show the results for collocated networks with various number of users, when traffic F and traffic A from Figure~\ref{fig:depictions-of-traffics} are used, respectively. In Traffic F, when $p_1=p_2=p_3=p$, the optimal policy can support at most $p=7/8 = 0.875$. In this case \ALGONAME\ achieves at least $p=0.87$, whereas LDF-ED achieves roughly $p=0.73$. Traffic A is similar in nature, but with more users and \ALGONAME\ is able to transmit all the packets; the result is shown in Figure~\ref{fig:col-trafficA}. 
}

\textit{General Networks.} We first consider the interference graph $\mathcal{G}_1$ in Figure~\ref{fig:simpleifgraph} involving 5 links, and interference edges $E_l=\{(l_1,l_2),(l_2,l_3),(l_2,l_4),(l_4,l_5)\}$.  For links $l_2$ and $l_5$, we have a periodic traffic with period $t=5$, where in slot 1 there are 2 packets arriving with deadline 2 and 3 and in slot 4 a packet arrives with deadline 1, and for links $l_1,l_3,l_5$, we have 1 packet arriving with deadline 1 at slot 1, and 1 packet arriving with deadline 2 at slot 4. The result for this graph is shown in Figure~\ref{fig:simple-graph}.

\NEW{
Next, we consider a complete bipartite graph $\mathcal{G}_2$ with two components, $V_1=\{l_1^\prime,l_2^\prime,l_3^\prime,l_4^\prime\}$ and $V_2=\{l_5^\prime,l_6^\prime,l_7^\prime,l_8^\prime\}$. The traffic used for links $l_1^\prime,l_2^\prime$ is the same as that of link $l_1$ in Graph $\mathcal{G}_1$ above. For links $l_3^\prime,l_4^\prime$ we used 
i.i.d. Bernulli with 1 arrival having deadline 1 with probability 0.25. For links $l_5^\prime,l_6^\prime$ we used the traffic used for link $l_2$ in Graph $\mathcal{G}_1$. For links $l_7^\prime,l_8^\prime$ we used i.i.d. traffic with $7$ arrivals with probability $0.05$, and $0$ arrivals otherwise, and deadline $10$. The results are depicted in Figures~\ref{fig:bipartite-6u-sub1} and \ref{fig:bipartite-6u-sub2}. 
}

\begin{figure}[t]
\centering
        \includegraphics[width=0.15\textwidth]{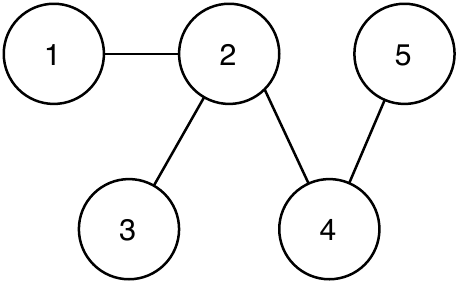}
\caption{Interference graph $\mathcal{G}_1$ used for simulations in Figure~\ref{fig:simple-graph}.}
\label{fig:simpleifgraph}
\end{figure}

 \begin{figure}[t]
 \centering
 \begin{subfigure}{.25\textwidth}
   \centering
   \includegraphics[width=0.95\linewidth]{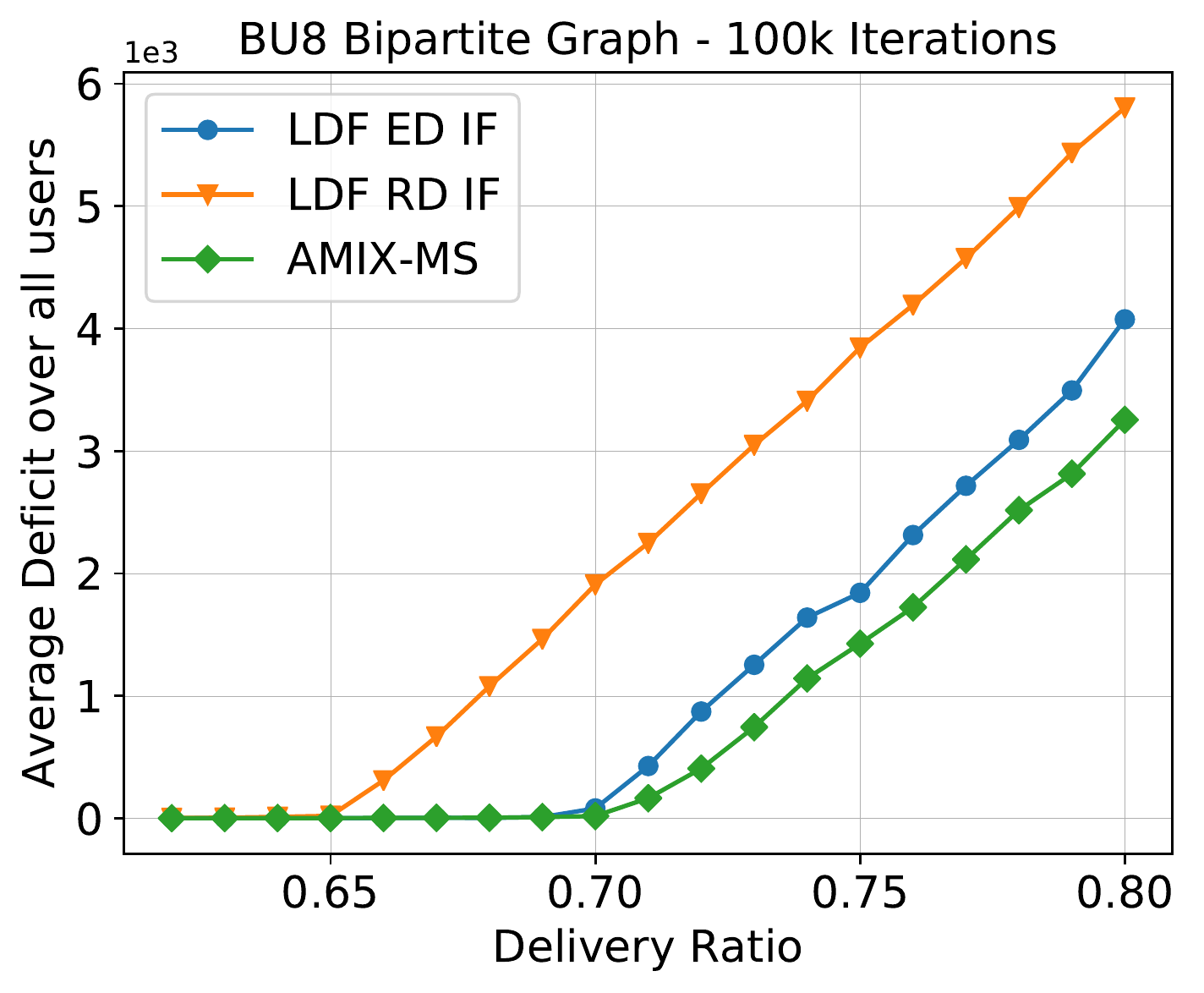}
   \caption{Coin-tossing admission}
   \label{fig:bipartite-6u-sub1}
 \end{subfigure}%
 \begin{subfigure}{.25\textwidth}
   \centering
   \includegraphics[width=0.95\linewidth]{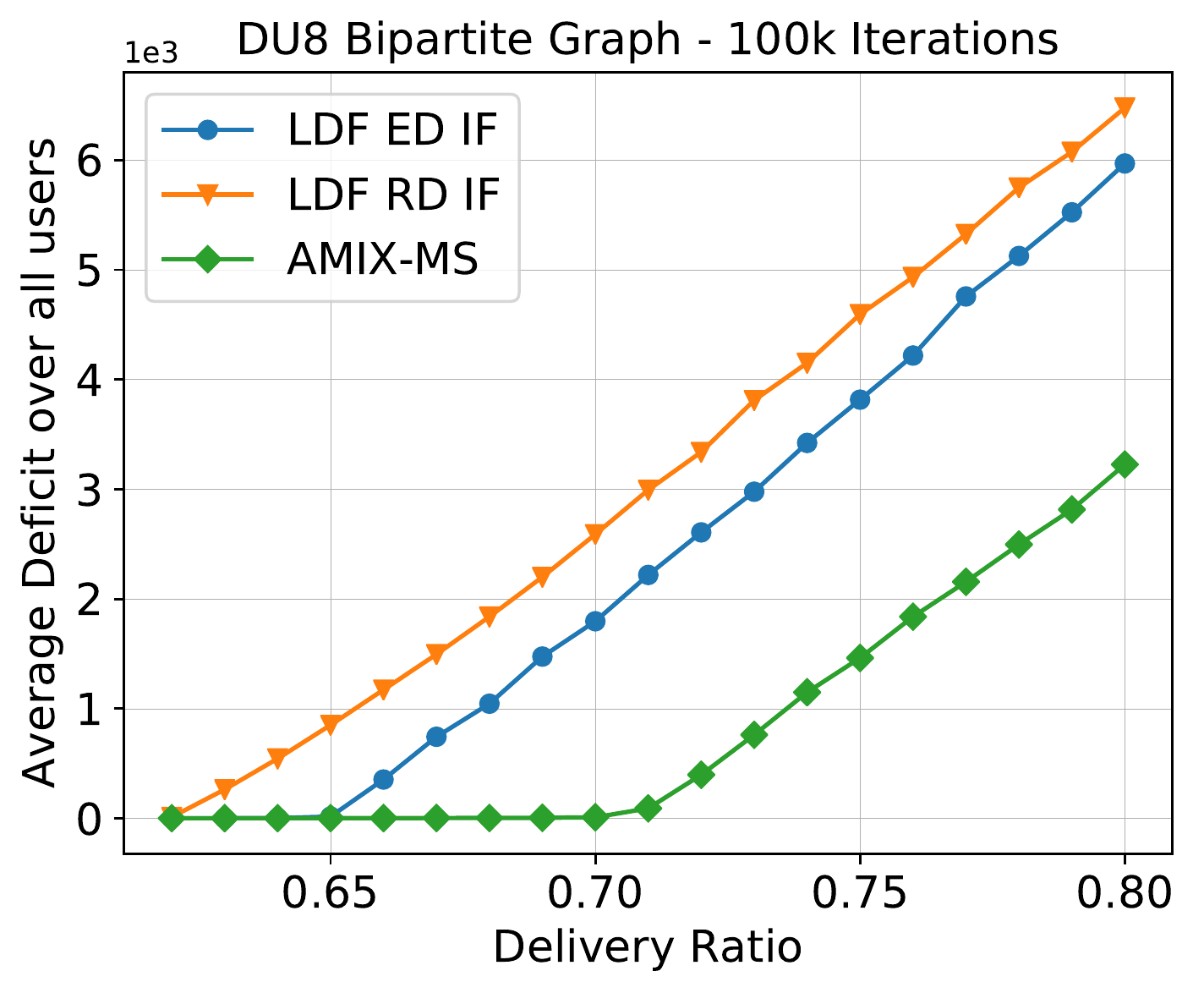}
   \caption{Deterministic admission}
   \label{fig:bipartite-6u-sub2}
 \end{subfigure}
 \caption{Comparison between policies on a complete bipartite graph with 8 links, and i.i.d. and Markovian arrivals.}
 \label{fig:bipartite-6u}
 \end{figure}


%


As we see, simulation results indicate that there are many scenarios that result in significant gap between our algorithms and LDF variants. This gap is especially pronounced when deterministic deficit admission is used, which is preferable as it provides a short-term guarantee on the deficit of a user. 


\section{Conclusion}
In this paper, we studied real-time traffic scheduling in wireless networks under an interference-graph model. Our results indicated the power of randomization over the prior deterministic greedy algorithms for scheduling real-time packets. In particular, our proposed randomized algorithms significantly outperform the well-known LDF policy in terms of efficiency ratio. As a future work, we will investigate efficient and distributed implementation of {\MPALGONAME} for general graphs.

\bibliography{Bibl}
\bibliographystyle{IEEEtran}

\end{document}